\newtheorem{Proposition}{Proposition}
\newtheorem{Remark}{Remark}
\DeclareMathOperator{\re}{\mathbb{R}}
\DeclareMathOperator{\simdist}{\stackrel{\mathcal{D}}{\sim}}
\newcommand{\ind}{\mathds{1}}
\newcommand{\E}{\mathbb{E}}
\newcommand{\abs}[1]{\left|#1\right|}
\DeclareMathOperator{\na}{\mathbb{N}}
\newcommand{\Prob}{\mathbb{P}}
\begin{document}

\def\figureautorefname{Figure}
\def\sectionautorefname{Section}
\def\subsectionautorefname{Section}
\def\subsubsectionautorefname{Section}
\def\Propositionautorefname{Proposition}

\title{An automatic robust Bayesian approach to principal component regression}

\author{Philippe Gagnon $^{1}$, Myl\`{e}ne B\'{e}dard $^{2}$, Alain Desgagn\'{e} $^{3}$}

\maketitle

\thispagestyle{empty}

\noindent $^{1}$Department of Statistics, University of Oxford, United Kingdom.

\noindent $^{2}$Department of Mathematics and Statistics, Universit\'{e} de Montr\'{e}al, Canada.

\noindent $^{3}$Department of Mathematics, Universit\'{e} du Qu\'{e}bec \`{a} Montr\'{e}al, Canada.

\begin{abstract}
Principal component regression uses principal components as regressors. It is particularly useful in prediction settings with high-dimensional covariates.
The existing literature treating of Bayesian approaches is relatively sparse. We introduce a Bayesian approach that is robust to outliers in both the dependent variable and the covariates. Outliers can be thought of as observations that are not in line with the general trend. The proposed approach automatically penalises these observations so that their impact on the posterior gradually vanishes as they move further and further away from the general trend, corresponding to a concept in Bayesian statistics called \textit{whole robustness}. The predictions produced are thus consistent with the bulk of the data. The approach also exploits the geometry of principal components to efficiently identify those that are significant. Individual predictions obtained from the resulting models are consolidated according to model-averaging mechanisms to account for model uncertainty. The approach is evaluated on real data and compared to its nonrobust Bayesian counterpart, the traditional frequentist approach, and a commonly employed robust frequentist method. Detailed guidelines to automate the entire statistical procedure are provided. All required code is made available, see \href{https://arxiv.org/abs/1711.06341}{ArXiv:1711.06341}.
\end{abstract}

\noindent Keywords: dimension reduction; linear regression; outliers; principal component analysis; reversible jump algorithms; whole robustness.

\section{Introduction}\label{sec_intro_pcr}

\begin{wrapfigure}{r}{0.36\textwidth}
\begin{center}
\vspace{-11mm}
\includegraphics[width=0.75\textwidth]{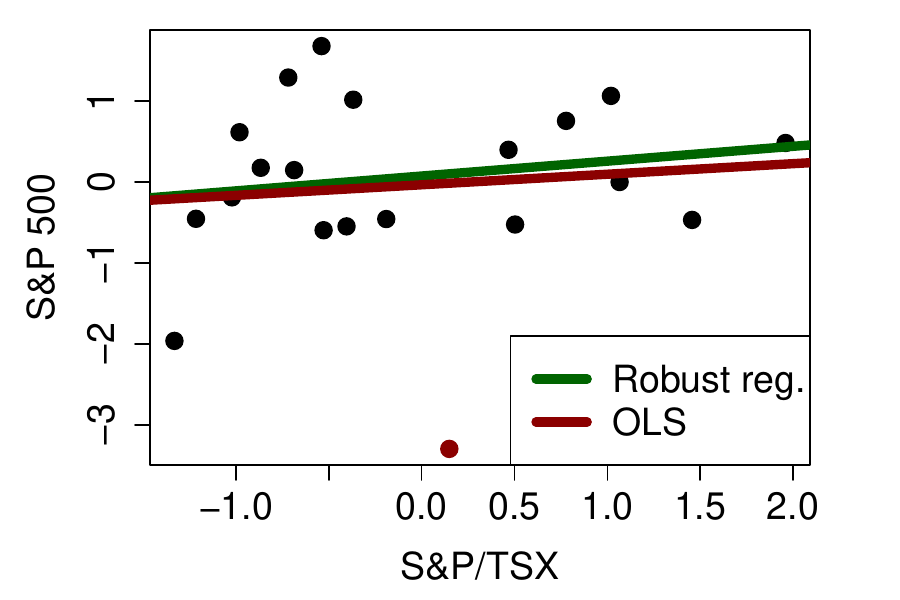}
\end{center}
\vspace{-8mm}
\caption{January 2011 daily returns}\label{fig_0} 
\vspace{-0mm}
\end{wrapfigure}

In statistical analyses, information carried by several variables is commonly summarised to allow visualisation or model estimation when the number of variables makes it unstable or impossible. For instance, S\&P 500 and S\&P/TSX respectively summarise the stock prices of 500 and about 250 large companies domiciled in the United States and Canada, and are commonly used to portray the American and Canadian economies. \autoref{fig_0} illustrates the relationship between the January 2011 daily returns of these two indices. The scatter plot is further summarised using two different linear regression models that respectively yield a robust regression line (in green) and an ordinary least squares regression line (in red). Given that different summaries (other than S\&P 500 or S\&P/TSX in our example) lead to different data points and therefore different regressions, one might however wonder whether the available or natural summaries are necessarily suitable for the tasks at hand.

Principal component regression (PCR) is the name given to a linear regression model using principal components (PCs) as regressors. It is based on a principal component analysis (PCA), which is commonly used to summarise the information contained in covariates. The principle is to find new axes in the covariate space by exploiting the correlation structure between the covariates, and then encode the covariate observations in that new coordinate system. The resulting variables, called principal components (PCs), are linearly independent and have the remarkable property that the first $q$ PCs retain the maximum amount of information carried by the original observations (compared to any other $q$-dimensional summary). Regrouping correlated variables to produce linearly independent ones is appealing in a linear regression context, as strongly correlated variables are known to carry redundant information, leading to unstable estimates. Companies within the same economic sector in stock market indices like S\&P 500 and S\&P/TSX are an example of such correlated variables. Linear independence also allows visualising the relationship between the dependent variable and the PCs by plotting the dependent variable against each of the PCs.

Due to the loss in the interpretability of the inference results engendered by transforming covariates, PCR is mainly used in a prediction context. It can nevertheless be useful for clarifying the underlying structure in the original covariates, as shown in \cite{West2003}. In this paper, we consider a Bayesian prediction framework and address four issues; those are described below.

\textbf{Robustness against outliers.} It is common knowledge that OLS (ordinary least squares) estimates become significantly contaminated in presence of outliers. In \autoref{fig_0}, the OLS regression line (in red) is pulled below the robust regression line (in green) by the outlier (red dot). OLS estimates make the assumption that errors are normally distributed, which affects both the linear regression and the PCA (see \cite{vsmidl2007bayesian}). In presence of outliers, the slimness of normal tails causes a shift in the posterior so as to incorporate the information carried by all the data. The posterior may thus find itself concentrated in an area that is not supported by any source of information, be it the outliers or the bulk of the data. This translates, for instance, into predictions that are not in line with either of these two groups.

The natural solution to this problem is to assume an error distribution with heavier tails, and therefore more adapted to the possible presence of outliers. The Student distribution becomes an obvious choice as it leads to a straightforward implementation of the Bayesian regression approach via the Gibbs sampler \citep{1984west431}. Using a heavy-tailed distribution like the Student however only allows attaining partial robustness \citep{andrade2011bayesian}, which may lead to regression coefficients with inflated variances, and ultimately contaminated model selection. Relying on an uncontaminated model selection procedure is crucial in our framework as the identification of important PCs relies on it.

%
%
It was recently proved in \cite{gagnon2018regression} that model selection in linear regression is uncontaminated when a super heavy-tailed error distribution is instead assumed. We follow this path, and based on that strategy of using super heavy-tailed distributions, introduce a new class of wholly robust Bayesian PCA. Hereafter, whole robustness refers to an approach that automatically penalises observations that are not in line with the general trend, so that their impact on the posterior distribution gradually vanishes as they move further from that trend.
 The assumed super heavy-tailed density matches the standard normal outside of the tails, which makes the approaches efficient. The resemblance between the two densities helps us to design the computational tools.

\textbf{Selection of significant PCs.} The selection of pertinent PCs to be included in our robust regression model is based on model selection and in line with the methods used in \cite{wang2012PCR} and \cite{tipton2017PCR}. Ours however differs in that we do not use the stochastic search variable selection (see \cite{George1993SSVS}), which is the common tool to discriminate among a large number of (typically correlated) regressors. We instead take advantage of the linear independence among PCs to quickly exclude the irrelevant ones, leading to the following two-step approach. We first evaluate the individual relevance of each PC through Bayes factors, after which the retained PCs are used to propose a sequence of nested models. The joint posterior of these models and their parameters is next computed. Observations for the dependent variable are predicted by accounting for model uncertainty through model averaging (see, for instance, \cite{raftery1997bayesian} and \cite{hoeting1999bayesian}).

\textbf{Automatic and efficient implementation.} 
Our approach to attain whole robustness (which consists in assuming super heavy-tailed error distributions)
however prevents us from having access to full conditional distributions and, therefore, to using Gibbs sampler.  For the robust PCA, we then propose a simplified computational scheme based on point estimates. The model posterior probabilities are however required in the linear regression stage of the statistical analysis, and so we turn to the reversible jump (RJ) algorithm to obtain estimates of these probabilities. The RJ sampler is a Markov chain Monte Carlo (MCMC) method introduced by \cite{green1995reversible} that allows to directly sample from the joint posterior of the models and their parameters. The efficiency of such samplers relies heavily on the design of the functions required for the implementation. We provide a detailed procedure to automatically implement an efficient RJ algorithm.

\textbf{Prior specification.} It is often difficult, in PCR, to specify meaningful priors on the models and their parameters. For this reason, noninformative priors are commonly favoured. The simplest noninformative structure is arguably the improper Jeffreys priors on the parameters of all models, along with
a uniform prior on the models. With such a prior structure, one might wonder whether the so-called Jeffreys-Lindley paradox \citep{lindley1957paradox, jeffreys1967prob}, representing inconsistent model selection results, may arise. We show that this is not the case and adopt that structure.

\textbf{Structure of the paper.} The general model is described in \autoref{sec_context_pcr}. Nonrobust normal PCA and regression approaches are presented in \autoref{sec_first_sit}, followed by their robust counterparts, representing the proposed methodology, in \autoref{sec_second_sit}. In particular, the proposed robust PCA is discussed in \autoref{sec_pca_robust}, while the robust linear regression is addressed in \autoref{sec_robust_reg}. \autoref{sec_RJ_PCR} presents the RJ sampler and then \autoref{sec_opt_impl_PCR} focuses on automating its implementation. The stock market indices example is revisited in \autoref{sec_real_data} where all the features of the proposed robust approach are illustrated. The validity of our prior structure is addressed in the supplementary material (\autoref{sec_supp}) as this part is 
not required to understand and implement the proposed methodology.

\section{Principal component regression}\label{sec_context_pcr}

Consider that we have access to a rank $r\in\{1,2,\ldots\}$ matrix $\mathbf{C}\in\re^{n\times p}$ containing $n\in\{1,2,\ldots\}$ observations from $p\in\{1,2,\ldots\}$ standardised covariates.
A PCA is then performed on this data set.
It will be seen that standardisation and PCA in the proposed robust approach are different from those in its nonrobust counterpart. We thus defer details about these steps to later sections.

Denote by $\mathbf{Z}_q$ the matrix of rank $q\leq r$ arising from either dimension reduction technique (nonrobust or robust PCA). The design matrix $\mathbf{X}:=(x_{ij})$ is constructed by simply grafting a column vector of 1's to the matrix $\mathbf{Z}_q$. For simplicity, we will refer to this extra column of $\mathbf{X}$ as the first component. The PCs are thus contained in the following columns, and $d:=q+1$ denotes the number of columns of $\mathbf{X}$.

We wish to study the relationship between a dependent variable with data points $Y_1,\ldots,Y_n\in\re$ and the PCs in order to predict values for the former. 
We start from the premise that the relationship is linear:
\begin{equation}\label{reg_mod}
        Y_i=\mathbf{x}_{i,K}^T \boldsymbol\beta_K+\epsilon_{i,K},\quad i=1,\ldots,n, \ \ K\in\{1,\ldots,\text{K}_{\text{max}}\},
\end{equation}
where $K$ is the model indicator, $\text{K}_{\text{max}}$ is a positive integer representing the number of models considered, and $\epsilon_{1,K},\ldots,\epsilon_{n,K}\in\re$ are the errors associated to Model $K$. 
The vector of observed PCs included in Model $K$ satisfies $\mathbf{x}_{i,K} := \{ x_{ij} : j \in I_K \}$, where $I_k\subseteq \{1,\ldots,d\}$ is a vector whose elements indicate which PCs are included in Model $K=k$. For instance, $I_1$ is associated to Model $1$ which, in this paper, always corresponds to the model containing only the intercept ($I_1:=\{1\}$). The $d_K$-dimensional vector of regression coefficients associated to Model $K$ is $\boldsymbol\beta_K:=(\beta_{1,K},\ldots,\beta_{d_K,K})^T\in\re^{d_K}$, where $d_K$ is the cardinality of $I_K$.  As is typically done in Bayesian linear regression, we assume that $\epsilon_{1,K},\ldots,\epsilon_{n,K}$ and $\boldsymbol\beta_K$ are $n+1$ conditionally independent random variables given $(K,\sigma_K)$, with $\sigma_K > 0$ being the scale parameter of the errors of Model $K$. The conditional density of $\epsilon_{i,K}$ is given by
   \begin{equation*}
    \epsilon_{i,K} \mid K,\sigma_K,\boldsymbol\beta_K\, \ \stackrel{\mathcal{D}}{=}\, \ \epsilon_{i,K}\mid K,\sigma_K\, \ \simdist\, \ (1/\sigma_K)f(\epsilon_{i,K}/\sigma_K)\,,\quad i=1,\ldots,n.
  \end{equation*}

  Even though we assume a 
  linear relationship between the dependent variable and regressors in \eqref{reg_mod}, we remain realistic and adopt George Box's point of view, which says that all models are wrong, but that some are useful. The degree of usefulness represented by the model fits will presumably be reflected in the posterior model probabilities.

To study the relationship between the dependent variable and the PCs, we first identify the statistically relevant PCs. The individual contribution of the various components is assessed using Bayes factors.  Specifically, we consider in the first step of the statistical analysis the $d$ models associated to $I_1:=\{1\}, I_2:=\{1, 2\}, \ldots, I_d = \{ 1, d \}$, and compare each of Models 2 through $d$ to Model 1. The PCs associated to Bayes factors greater than a given threshold are retained in the second step of the statistical analysis; the others are discarded.

In the second step of the analysis, we consider the sequence of nested models arising from the statistically significant PCs and find the posterior probabilities of these models, along with their parameter estimates. For instance, if the first, second and fourth PCs are the only ones deemed relevant, the sequence of models is  $I_1:=\{1\}$, $I_2:=\{1, 2\}$, and $I_3:=\{1, 2, 4\}$. Considering only a sequence of nested models is natural in our context, as PCA generates components that carry less and less information about the original covariates; that also simplifies subsequent computations.

Finding posterior probabilities and parameter estimates is achieved by sampling from the joint posterior distribution of $(K,\sigma_K,\boldsymbol\beta_K)$ given $\mathbf{y}:=(y_1,\ldots,y_n)^T$, denoted by $\pi(k,\sigma_k,\boldsymbol\beta_k \mid \mathbf{y})$, where the domain of $k$ depends on which step of the analysis is performed (and, for the second step, on the results of the previous step). Once estimates are obtained in the second step, values for the dependent variable can be predicted through model-averaging mechanisms.

\section{Normal nonrobust models}\label{sec_first_sit}

\subsection{Traditional principal component analysis}\label{sec_pca_nonrobust}

Several strategies allow retrieving the usual PCA from 
estimates of statistical models (see, e.g., \cite{Tipping1999probabilisticPCA} and \cite{vsmidl2007bayesian}). These methods assume that $\mathbf{C}$ has been generated from a linear model with normal errors. One can thus view PCs as point estimates and conduct a full Bayesian analysis of the model. We follow here the approach of \cite{vsmidl2007bayesian}; its presentation will facilitate the introduction of the robust PCA model as it will be analogously defined in \autoref{sec_pca_robust}.

The singular value decomposition allows expression of the matrix $\mathbf{C}$ as $\mathbf{Z}\mathbf{L}\mathbf{A}^T$, where $\mathbf{A}$ is a $p\times r$ matrix whose columns are the eigenvectors $\mathbf{v}_1,\ldots, \mathbf{v}_r$ of the sample correlation matrix of $\mathbf{C}$ with corresponding eigenvalues $\lambda_1\geq\lambda_2\geq\ldots\geq\lambda_r$, $\mathbf{L}$ is a $r\times r$ diagonal matrix with diagonal entries given by (up to a constant) $\lambda_1,\lambda_2,\ldots,\lambda_r$, and $\mathbf{Z}$ is a $n\times r$ matrix whose $j$-th column is given by $\lambda_j^{-1/2} \mathbf{C} \mathbf{v}_j$; see \cite{jolliffe2011principal} for instance. The PCs are traditionally defined as the vectors $\mathbf{C} \mathbf{v}_j$. We consider hereafter that the eigenvalues $\lambda_j$ are the sample variances of the PCs. The vectors $\lambda_j^{-1/2} \mathbf{C} \mathbf{v}_j$ therefore correspond to standardised PCs. Recall that the PCs are additionally pairwise orthogonal.

With $q < r$, let $\mathbf{Z}_q$ and $\mathbf{A}_q$ be the matrices comprised of the first $q$ columns of $\mathbf{Z}$ and $\mathbf{A}$, respectively, and $\mathbf{L}_q$ be the diagonal matrix with diagonal entries given by $\lambda_1,\ldots,\lambda_q$. If we want to further reduce the dimension of $\mathbf{Z}$ to $n\times q$, and therefore approximately reconstruct $\mathbf{C}$, \cite{vsmidl2007bayesian} present a model and a set of assumptions under which the maximum likelihood solution that arises is the anticipated matrix $\mathbf{Z}_q$. The model is
\begin{align}\label{eqn_model_PCA}
 \mathbf{C}=\mathbf{M}+\mathbf{E},
\end{align}
where $\mathbf{M}$ is assumed to have rank $q$ (and can therefore be decomposed using the singular value decomposition as above), and entries of $\mathbf{E}$ are assumed to be independently distributed as $\mathcal{N}(0,\eta^2)$, $\eta>0$. The maximum likelihood estimate (MLE) of $\mathbf{M}$ is $\mathbf{Z}_q\mathbf{L}_q\mathbf{A}_q^T$. This follows from the fact that $\mathbf{Z}_q\mathbf{L}_q\mathbf{A}_q^T$ minimises the total squared reconstruction error among rank $q$ matrices. The MLE corresponds to the maximum a posteriori (MAP) estimate when the prior is flat. We use the matrix $\mathbf{Z}_q$ to form our design matrix $\mathbf{X}$ in the nonrobust linear regressions.

It usually is good practice to cap the percentage of the total variation that is accounted for as above a certain threshold, eigenvectors are essentially numerical noise. In the numerical analyses we limit it to 95\%, meaning that $q$ is the maximum value such that $\sum_{j=1}^q \lambda_j/\sum_{j=1}^r \lambda_j\leq 0.95$.

\begin{Remark}
It is clear from \eqref{eqn_model_PCA} that $\mathbf{C}$ is viewed as a matrix containing observations from random variables. This may be confusing at first given that regressors are usually treated as known constants. In our case, the regressors are a function of $\mathbf{C}$; they are thus initially treated as observations from random variables in the PCA part of the statistical analysis. We next consider $\mathbf{Z}_q$ (or its robust version) as known constants in the regression part of the analysis. Our approach can thus be viewed as an approximation to the full and exact Bayesian analysis, in which all random unknown quantities, including $\mathbf{Z}_q, \boldsymbol\beta_K$, and $\sigma_K$, would be in linear models and estimated simultaneously, conditionally on $\mathbf{C}$ and $\mathbf{y}$. Our approach aims at simplifying the computation and interpretation of the statistical procedure.
\end{Remark}

\subsection{Ordinary least squares regression}\label{sec_OLS}

 Under the normality of the error distribution in the linear regressions (i.e.\ assuming that $f:=\mathcal{N}(0,1)$), the joint posterior $\pi(k,\sigma_k,\boldsymbol\beta_k \mid \mathbf{y})$ leads to closed-form expressions for the posterior model probabilities and parameter estimates. These expressions, detailed in \autoref{prop_posterior} below, are handy for comparing the results arising from our robust approach to those obtained under the normality assumption in the numerical analyses. They will also be used in the design of the RJ algorithm to sample from the posterior under the super heavy-tailed distribution assumption. Indeed, the super heavy-tailed distribution that we use is similar to the normal distribution, except in the tails. When there is no outlier, this thus leads to a posterior that is similar to that under normality. In the presence of outliers, the full posterior of the robust model is similar to the posterior based on the nonoutliers only (i.e.\ excluding the outliers) under normality. In either case, relying on the structure of the posterior under normality is therefore suitable for designing the RJ algorithm. 

 \begin{Proposition}\label{prop_posterior}
   Assume that $f:=\mathcal{N}(0,1)$ and let the conditional prior density of $(\sigma_K, \boldsymbol\beta_K)$ given $K$ be $\pi(\sigma_k, \boldsymbol\beta_k| k)\propto 1/\sigma_k$. Then, the posterior can be factorised as
   \begin{align*}
      \pi(k,\sigma_k,\boldsymbol\beta_k\mid \mathbf{y})&= \pi(k\mid \mathbf{y})\,\pi(\sigma_k\mid k,\mathbf{y})\,\prod_{j=1}^{d_k} \pi(\beta_{j,k}\mid k, \sigma_k,\mathbf{y}),
  \end{align*}
  where $k\in\{1,\ldots,\textnormal{K}_{\textnormal{max}}\},\sigma_k>0,\boldsymbol\beta_k\in\re^{d_k}$,
  \begin{align}\label{eqn_post_p}
    \pi(k\mid \mathbf{y})\propto\frac{\pi(k) \, \Gamma((n - d_k)/2) \, \pi^{d_k/2}}{\left(\|\mathbf{y} - \widehat{\mathbf{y}}_k \|_2^2 / (n - 1)\right)^{\frac{n-d_k}{2}}},
  \end{align}
  \begin{align*}
   \pi(\sigma_k\mid k,\mathbf{y})&=\frac{2^{1-\frac{n-d_k}{2}}\left(\|\mathbf{y} - \widehat{\mathbf{y}}_k \|_2^2\right)^{\frac{n-d_k}{2}}}{\Gamma((n-d_k)/2) \, \sigma_k^{n-d_k+1}} \,  \exp\left\{-\frac{1}{2\sigma_k^2}\, \|\mathbf{y} - \widehat{\mathbf{y}}_k \|_2^2\right\},
  \end{align*}
  $\beta_{1,K} \mid K, \sigma_K,\mathbf{y}\sim \mathcal{N}(\widehat{\beta}_{1,K}:=0, \sigma_K^2/n)$, and finally $\beta_{j,K} \mid K, \sigma_K,\mathbf{y}\sim\mathcal{N}(\widehat{\beta}_{j,K}:=\sum_{i=1}^n x_{iI_{j,K}} y_i/(n-1), \sigma_K^2/(n-1))$ for $j=2,\ldots,d_K$ (if $K\geq 2$). Here, $\|\cdot\|_2$ is the Euclidean norm, $\widehat{\mathbf{y}}_k:=\mathbf{x}_{i,k}^T \, \widehat{\boldsymbol\beta}_k$, $\widehat{\boldsymbol\beta}_k:=(\widehat{\beta}_{1,k},\ldots,\widehat{\beta}_{d_k,k})^T$, $I_{j,K}$ is the $j$-th component of $I_K$, and $\pi(k)$ is the prior of $K$. Note that the normalisation constant of $\pi(k \mid \mathbf{y})$ is the sum over $k$ of the expression on the right-hand side of \eqref{eqn_post_p}.
 \end{Proposition}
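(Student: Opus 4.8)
The plan is to start from the unnormalised joint posterior, which under $f=\mathcal N(0,1)$ and the prior $\pi(\sigma_k,\boldsymbol\beta_k\mid k)\propto 1/\sigma_k$ reads
\[
\pi(k,\sigma_k,\boldsymbol\beta_k\mid\mathbf y)\;\propto\;\pi(k)\,\sigma_k^{-(n+1)}\exp\!\Big\{-\tfrac{1}{2\sigma_k^2}\,\|\mathbf y-\mathbf X_k\boldsymbol\beta_k\|_2^2\Big\},
\]
where $\mathbf X_k$ is the $n\times d_k$ submatrix of $\mathbf X$ with columns indexed by $I_k$. The crucial observation — the one doing essentially all the work — is that the columns of $\mathbf X$ are mutually orthogonal: the first column is $\mathbf 1$, the remaining columns are the standardised PCs $\lambda_j^{-1/2}\mathbf C\mathbf v_j$, which are centred (hence orthogonal to $\mathbf 1$) and pairwise orthogonal by construction of the PCA, and each has squared norm $n-1$. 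Thus $\mathbf X_k^T\mathbf X_k=\mathrm{diag}(n,n-1,\ldots,n-1)$, and the standard least-squares decomposition gives $\|\mathbf y-\mathbf X_k\boldsymbol\beta_k\|_2^2=\|\mathbf y-\widehat{\mathbf y}_k\|_2^2+(\boldsymbol\beta_k-\widehat{\boldsymbol\beta}_k)^T\mathbf X_k^T\mathbf X_k(\boldsymbol\beta_k-\widehat{\boldsymbol\beta}_k)$ with $\widehat{\boldsymbol\beta}_k=(\mathbf X_k^T\mathbf X_k)^{-1}\mathbf X_k^T\mathbf y$ and $\widehat{\mathbf y}_k=\mathbf X_k\widehat{\boldsymbol\beta}_k$. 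Because $\mathbf X_k^T\mathbf X_k$ is diagonal, this quadratic form splits as $\sum_j(\mathbf X_k^T\mathbf X_k)_{jj}(\beta_{j,k}-\widehat\beta_{j,k})^2$, and the coordinates of $\widehat{\boldsymbol\beta}_k$ reduce to the explicit formulas in the statement, namely $\widehat\beta_{1,k}=n^{-1}\sum_i y_i=0$ (since $\mathbf y$ is standardised) and $\widehat\beta_{j,k}=(n-1)^{-1}\sum_i x_{iI_{j,k}}y_i$ for $j\ge2$.

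Next I would read off the factorisation. Holding $k$ and $\sigma_k$ fixed, the exponential factorises over $j$ into Gaussian kernels, which identifies $\beta_{j,K}\mid K,\sigma_K,\mathbf y\sim\mathcal N(\widehat\beta_{j,K},\sigma_K^2/(\mathbf X_K^T\mathbf X_K)_{jj})$ — variance $\sigma_K^2/n$ for $j=1$ and $\sigma_K^2/(n-1)$ for $j\ge2$ — and yields the product form $\prod_j\pi(\beta_{j,k}\mid k,\sigma_k,\mathbf y)$. Integrating $\boldsymbol\beta_k$ over $\re^{d_k}$ contributes the Gaussian normalising constant $(2\pi\sigma_k^2)^{d_k/2}\det(\mathbf X_k^T\mathbf X_k)^{-1/2}$ and leaves a function of $\sigma_k$ proportional to $\sigma_k^{-(n-d_k+1)}\exp\{-\|\mathbf y-\widehat{\mathbf y}_k\|_2^2/(2\sigma_k^2)\}$; recognising this as a scaled inverse-chi-square kernel gives $\pi(\sigma_k\mid k,\mathbf y)$ after normalising, using the elementary identity $\int_0^\infty\sigma^{-a}e^{-b/(2\sigma^2)}\,d\sigma=\tfrac12\Gamma(\tfrac{a-1}{2})(b/2)^{-(a-1)/2}$ with $a=n-d_k+1$, $b=\|\mathbf y-\widehat{\mathbf y}_k\|_2^2$.

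Finally, integrating out $\sigma_k$ as well produces the marginal likelihood of Model $k$; multiplying by $\pi(k)$ and normalising over $k$ yields $\pi(k\mid\mathbf y)$. The only care needed here is bookkeeping of constants: one collects the $(2\pi)^{-n/2}$ from the likelihood, the $(2\pi\sigma_k^2)^{d_k/2}$ from the $\boldsymbol\beta_k$ integration, the factor $\tfrac12\Gamma((n-d_k)/2)2^{(n-d_k)/2}$ from the $\sigma_k$ integration, and $\det(\mathbf X_k^T\mathbf X_k)^{-1/2}=(n(n-1)^{d_k-1})^{-1/2}$; the powers of $2$ cancel, leaving $\pi^{(d_k-n)/2}$, and since the exponent identity $(d_k-1)/2+(n-d_k)/2=(n-1)/2$ does not involve $k$, the determinant factor only modifies the result by the $k$-independent constant $n^{-1/2}(n-1)^{-(n-1)/2}$, which may be absorbed into the proportionality. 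This is exactly what allows \eqref{eqn_post_p} to be written with $\|\mathbf y-\widehat{\mathbf y}_k\|_2^2/(n-1)$ raised to the power $(n-d_k)/2$. I do not anticipate any genuine obstacle: the whole argument is the classical conjugate normal-linear-model computation, and the only point that makes it clean — rather than requiring the inversion of a full $d_k\times d_k$ Gram matrix — is the orthogonality of the principal components, so the one step to verify with care is that diagonal structure together with the resulting coordinatewise estimators.
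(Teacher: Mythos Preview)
Your proposal is correct and follows essentially the same approach as the paper: both rely on the orthogonality of the design-matrix columns (equivalently, $\mathbf X_k^T\mathbf X_k=\mathrm{diag}(n,n-1,\ldots,n-1)$) to split the quadratic form, identify the coordinatewise normal conditionals for $\boldsymbol\beta_k$, and then integrate out $\boldsymbol\beta_k$ and $\sigma_k$ in turn. The only cosmetic difference is that the paper carries out the expansion of $\sum_i(y_i-\mathbf x_{i,k}^T\boldsymbol\beta_k)^2$ coordinate by coordinate and verifies $\|\mathbf y\|_2^2-\|\widehat{\mathbf y}_k\|_2^2=\|\mathbf y-\widehat{\mathbf y}_k\|_2^2$ directly, whereas you invoke the standard least-squares identity and the diagonal Gram matrix; your constant bookkeeping, including the absorption of $\det(\mathbf X_k^T\mathbf X_k)^{-1/2}$ into the $(n-1)$ scaling, is accurate.
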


 \begin{proof}
 See the supplementary material (\autoref{sec_supp}).
\end{proof}

In our analyses, we use Bayesian model averaging to predict values for the dependent variable given sets of observations from the covariates. When normality is assumed, we can therefore use $\E[Y_{n+1} \mid\mathbf{y}]=\sum_k \pi(k \mid \mathbf{y}) \, \mathbf{x}_{n+1,k}^T \, \widehat{\boldsymbol\beta}_k$, where $\widehat{\boldsymbol\beta}_k$ is defined in Proposition \ref{prop_posterior}. Note that under normality, $\sigma_K^2 \mid K,\mathbf{y}$ has an inverse-gamma distribution with shape and rate parameters given by $(n-d_K)/2$ and $\|\mathbf{y} - \widehat{\mathbf{y}}_K \|_2^2/2$, respectively.

\section{Proposed robust models}\label{sec_second_sit}

The proposed solution to limit the impact of outliers in PCA and linear regression is simple: replace the traditional normality assumption on the error terms by a super heavy-tailed distribution assumption. The super heavy-tailed distribution that we use is the log-Pareto-tailed standard normal (LPTN) distribution with parameter $\rho\in (2\Phi(1) - 1, 1) \approx (0.6827, 1)$, where $\Phi$ is the cumulative distribution function of a standard normal. This distribution has been introduced in \cite{desgagne2015robustness} and is expressed as
 \begin{equation}\label{eqn_log_pareto_pcr}
  f(x):=\left\{
                                                    \begin{array}{lcc}
                                                      \varphi(x)  & \text{ if } & \abs{x}\leq \tau, \\
                                                      \varphi(\tau)\,\frac{\tau}{|x|}\left(\frac{\log \tau}{\log |x|}\right)^{\lambda+1} & \text{ if } & \abs{x}>\tau, \\
                                                    \end{array}
\right.
  \end{equation}
  where $x\in\re$. The terms $\tau>1$ and $\lambda>0$ are functions of $\rho$ and satisfy
  \begin{align*}
 & \tau:=\Phi^{-1}((1+\rho)/2) := \{\tau : \Prob(-\tau \leq Z \leq \tau)= \rho \,\text{ for }\, Z\, \simdist \, \mathcal{N}(0,1)\}, \\
 & \lambda:=2(1-\rho)^{-1}\varphi(\tau) \, \tau \log(\tau), \nonumber
 \end{align*}
 with $\varphi(\,\cdot\,)$ and $\Phi^{-1}(\,\cdot\,)$ respectively being the probability density function (PDF) and inverse cumulative distribution function of a standard normal. The parameter $\rho$ controls the size of the interval over which $f$ exactly matches the standard normal density (i.e.\ the interval $[-\tau, \tau]$). Outside of this area, the tails behave according to a log-Pareto density $(1/|x|)(\log|x|)^{-\lambda-1}$, hence its name.

Setting $\rho$ to $0.95$ has proved to be suitable for practical purposes, as addressed in \cite{desgagne2015robustness} for location-scale models and in \cite{gagnon2018regression} for linear regression. Accordingly, this is the value that will be used in our numerical analyses. Smaller values lead to improved robustness, but also to models that are further from normality (which then lead to discrepancies among estimations in the absence of outliers).

The theoretical result that motivates the use of super heavy-tailed distributions has been introduced in \cite{gagnon2018regression}. 
It establishes that, as outliers (because of extreme dependent and/or covariate observations) move further and further away from the general trend, the posterior distribution of $(K, \sigma_K, \boldsymbol\beta_K)$ arising from the whole data set converges towards the posterior of $(K, \sigma_K, \boldsymbol\beta_K)$ arising from the nonoutliers only. To prove this, it is however necessary to assume that there are at most $\lfloor n / 2 - (\max d_k - 1 / 2)\rfloor$ outliers in the data set, with $\lfloor\,\cdot\,\rfloor$ being the floor function. For a fixed $\max d_k$, this condition translates into a limiting breakdown point of $50\%$ as $n\longrightarrow\infty$.

As explained in \cite{gagnon2018regression}, these models have built-in robustness that resolves conflict in a sensitive way. It takes full consideration of nonoutliers and excludes observations that are undoubtedly outlying; in between these two extremes, it balances and bounds the impact of possible outliers. In other words, there is no need to explicitly identify outliers; the method automatically deals with the level of (un)certainty about the nature of the observations (nonoutliers, clear outliers or potential outliers), which is particularly valuable in high-dimensional and model selection problems.  
The robust models and their properties are the subject of a whole article. For brevity purposes, we refer the interested reader to \cite{gagnon2018regression} for more details.


\subsection{Robust principal component analysis}\label{sec_pca_robust}

Attempts at robustifying the traditional PCA model in \eqref{eqn_model_PCA} have been made by various authors (see, for instance, \cite{luttinen2009bayesian} and \cite{zhao2014robust}). They however follow the model specification of  \cite{Tipping1999probabilisticPCA} as opposed to that of \cite{vsmidl2007bayesian} (as we do here), and accordingly do not explicitly impose a rank constraint on the matrix $\mathbf{M}$ used to reconstruct $\mathbf{C}$. As mentioned in \autoref{sec_pca_nonrobust}, this constraint ensures that $\mathbf{M}$ can be decomposed as $\tilde{\mathbf{Z}}_q \tilde{\mathbf{L}}_q \tilde{\mathbf{A}}_q$, where $\tilde{\mathbf{Z}}_q$ and $\tilde{\mathbf{A}}_q$ have orthogonal columns (and are estimated by $\mathbf{Z}_q$ and $\mathbf{A}_q$ under the normal errors assumption). This orthogonality combined with the properties of PCA lead to the appealing geometric interpretation that those new axes are the best to reflect the information contained in $\mathbf{C}$. It also facilitates the statistical procedure for identifying relevant regressors. The price to pay for these advantages under the robust model is a significant increase in terms of computational complexity, as it becomes necessary to perform sampling and optimisation within the manifold of orthogonal matrices.  As an alternative to this computationally demanding route, we propose here an asymptotic approximation to a wholly robust PCA (as $n\longrightarrow\infty$ and outliers move further away from the general trend). An exhaustive analysis of the exact version (including its implementation) will be conducted separately.

In wholly robust PCA, the entries of the error matrix $\mathbf{E}:=(e_{ij})$ are such that $e_{ij} \mid\eta \, \ \simdist\, (1/\eta)g(e_{ij}/\eta)$, with $g$ the density of the LPTN. Under this error distribution assumption, we conjecture that a convergence result similar to that proved in \cite{gagnon2018regression} holds. In particular, the posterior distribution of $(\tilde{\mathbf{Z}}_q, \tilde{\mathbf{L}}_q, \tilde{\mathbf{A}}_q, \eta)$ (obtained from the covariate matrix $\mathbf{C}$ under LPTN errors)  converges towards the posterior of $(\tilde{\mathbf{Z}}_q, \tilde{\mathbf{L}}_q, \tilde{\mathbf{A}}_q, \eta)$ obtained from a new covariate matrix $\mathbf{C}^*$ and LPTN errors, as the outliers move away from the trend. Generally speaking, $\mathbf{C}^*$ is a matrix in which outlying covariate observations are vertically projected onto a regression plane that is obtained using the nonoutliers only.  
The proposed approximation to a wholly robust PCA makes use of the fact that the model with LPTN errors is similar to that with normal errors for the same reasons as \autoref{sec_OLS}, and thus essentially consists in computing the PCs using $\mathbf{C}^* \mathbf{v}_j^*$ as in \autoref{sec_pca_nonrobust}, with
$\mathbf{v}_j^*$ being the $j$-th eigenvector of a robust correlation matrix of $\mathbf{C}^*$. 
%
%
To better understand what happens, we consider an example containing a single PC which is simple enough for the wholly robust PCA model to be estimated. The orthogonality is indeed trivially verified given that there is only one column in $\tilde{\mathbf{Z}}_q$ and $\tilde{\mathbf{A}}_q$.

Suppose that $\mathbf{C}$ is a $21 \times 2$ matrix of observed covariates. Observations from the first covariate are $c_{i1} = i - 11$, $i=1, \ldots, 21$, and observed values from the second one are generated from the model $c_{i2} = c_{i1} + \epsilon_i$ with $\epsilon_i \sim \mathcal{N}(0,1)$, $i=1, \ldots, n$. \autoref{fig_robust_PCA} (a) illustrates the relationship between the observed covariates.

Let us now introduce an outlier in this sample by moving $(c_{21,1}, c_{21,2}) = (10, 10.92)$ to $(10,20)$; this sample is represented by the black dots in \autoref{fig_robust_PCA} (b). Applying a traditional PCA to these observed covariates and then using it to retrieve the matrix $\mathbf{C}$ yield the red dots in \autoref{fig_robust_PCA} (b); the reconstruction using the traditional PCA can be seen to rotate around the centre of the data as the outlier moves away from the trend. The wholly robust PCA approach leads to different results. The reconstruction of $\mathbf{C}$ using that approach is represented by the yellow dots in \autoref{fig_robust_PCA} (b).

Now, suppose that the outlier $(c_{21,1}, c_{21,2})$ is vertically projected onto a regression line that is obtained using the first 20 observed covariates (i.e.\ the nonoutlying points only); in other words, the outlier is replaced by its predicted value at $c_{21,1}$. Denote this new covariate matrix by $\mathbf{C}^*$. It is observed that as $j$ increases in $(c_{21,2}, c_{21,2}) = (10, 10.92+j)$, the posterior distribution of $(\tilde{\mathbf{Z}}_{q}, \tilde{\mathbf{L}}_q, \tilde{\mathbf{A}}_q,\eta)$ (obtained from $\mathbf{C}$, which includes the outlier) converges towards the posterior of $(\tilde{\mathbf{Z}}_{q}, \tilde{\mathbf{L}}_q, \tilde{\mathbf{A}}_q,\eta)$ obtained from $\mathbf{C}^*$. Applying the approximate robust PCA 
and then using it to reconstruct $\mathbf{C}$ yield the green dots in \autoref{fig_robust_PCA} (b).

In that figure, it is seen that the exact and approximate robust approaches (yellow and green dots) produce very similar results; the reconstruction of the outlier is however different under both approaches (we explain why it is the case and why it is not a problem in robust PCR in the following paragraphs). It turns out that as the outlier $(c_{21,2}, c_{21,2}) = (10, 10.92+j)$ reaches $(10,20)$, the posterior distribution of $(\tilde{\mathbf{Z}}_{q}, \tilde{\mathbf{L}}_q, \tilde{\mathbf{A}}_q,\eta)$ (based on $\mathbf{C}$) has essentially converged. Indeed, moving the outlier further upwards has no effect on the results from the robust approaches; this is obviously not the case for the traditional PCA, which pursues its rotation around the centre of the data.  For the data set in \autoref{fig_robust_PCA} (b), the squared reconstruction errors based on the nonoutliers only are 8.77 and 16.39 for the approximate robust and nonrobust PCA, respectively; the exact robust method yields a similar result to its approximate counterpart.

\begin{figure}[ht]
  \centering
  $\begin{array}{cc}
   \includegraphics[width=0.40\textwidth]{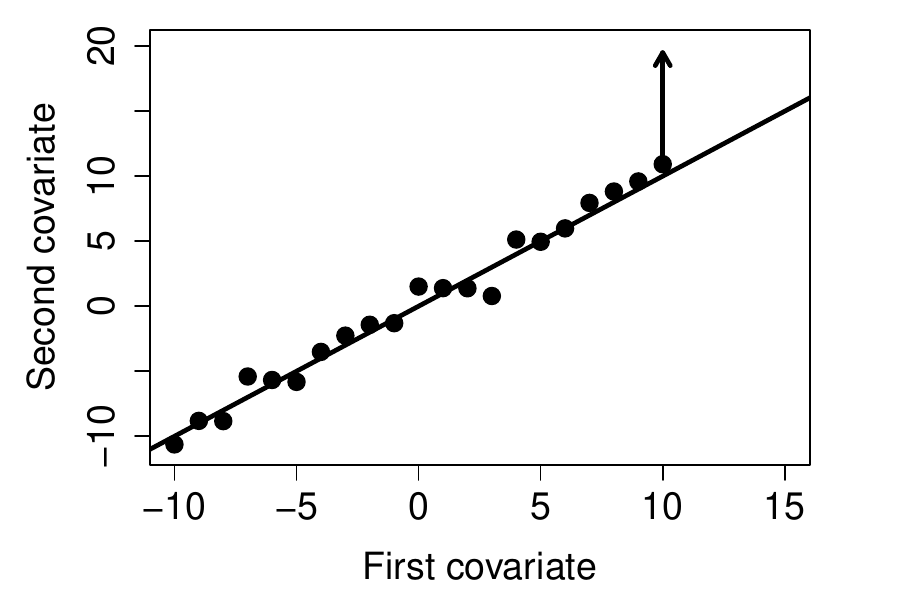} & \includegraphics[width=0.40\textwidth]{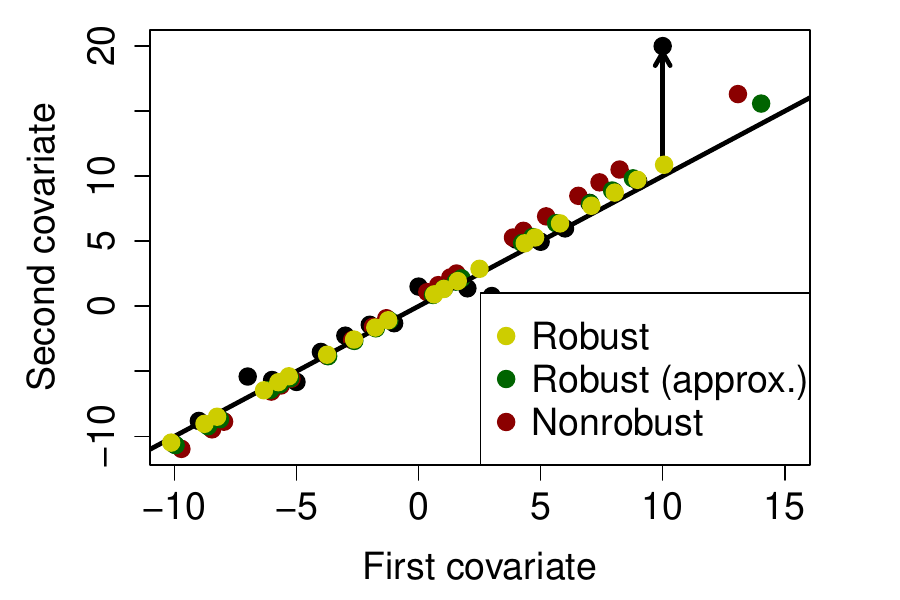} \cr
   \textbf{(a)} & \textbf{(b)}
  \end{array}$
  \vspace{-3mm}
\caption{(a) $n=21$ points generated from the model $c_{i2} = c_{i1} + \epsilon_i$ with $\epsilon_i\sim\mathcal{N}(0, 1)$; (b) data set with outlier, and reconstruction using one PC under the robust and nonrobust PCA; the lines $y=x$ are also depicted 
}\label{fig_robust_PCA}
 \end{figure}

We now detail the implementation of the approximate robust PCA.

\begin{enumerate}
\itemsep 0mm

 \item Standardise the columns of the original data set to obtain $\mathbf{C}$ using the robust location-scale model of \cite{desgagne2015robustness}, with an LPTN error distribution and $\rho:=0.95$. This model corresponds to the linear regression model with the intercept only and $f := \text{LPTN}$. Location and scale estimates $\widehat{\mu}_j$ and $\widehat{\sigma}_j$ are thus used to standardise Column $j$, $j=1,\ldots,p$.

 \item Compute robust correlations between all pairs of columns in $\mathbf{C}$ using the slope estimator of the robust simple regression model with an LPTN error distribution and $\rho:=0.95$. These correlations form the robust correlation matrix. For simplicity, we set the upper diagonal entries to $\widehat{\beta}_{j_1,j_2}$, which denote the robust correlations between the standardised Columns $j_1$ and $j_2$ where Column $j_2 > j_1$ plays the role of the dependent variable. We next make the matrix symmetrical and set its diagonal elements to 1.

 \item  Compute the PCs $\widehat{\mathbf{Z}}_{q}$ using $\mathbf{C} \widehat{\mathbf{v}}_j$, with $\widehat{\mathbf{v}}_j$ being the $j$-th eigenvector of the robust correlation matrix of $\mathbf{C}$.

\end{enumerate}

A mathematical justification of this approximation is presented in the supplementary material (\autoref{sec_supp}). It is shown that $\mathbf{C}$ is asymptotically equivalent to $\mathbf{C}^*$ except for the components where there are outliers. Also,  $\widehat{\mathbf{v}}_j$ is asymptotically equivalent to $\mathbf{v}_j^*$.
   Therefore there might be extreme values in the robust PCs, as there might be some in $\mathbf{C}$. If they exist, these extreme values will be handled by the robust linear regressions given their ability to deal with all types of outliers including leverage points.

Under the exact robust PCA approach, $\tilde{\mathbf{Z}}_{q}$ is directly estimated from the robust model; that represents the difference with the approximate method. The main advantage in using the approximate robust PCA is computational: the required estimates $\widehat{\mu}_j, \widehat{\sigma}_j$, and $\widehat{\beta}_{j_1,j_2}$ are easily obtained and can be computed in parallel. In our numerical experiments $\widehat{\mu}_j, \widehat{\sigma}_j$, and $\widehat{\beta}_{j_1,j_2}$ are maximum a posteriori (MAP) estimates with flat priors (corresponding to MLE). A second advantage is that the method allows automatic outlier detection. As in \cite{gagnon2018regression}, we compute estimates of the standardised residuals in the simple linear regressions as $z_{i}^{j_1,j_2}:=(c_{i,j_2}-\alpha_{j_1,j_2} - \beta_{j_1,j_2}c_{i,j_1})/\sigma_{j_1,j_2}$, using MAP estimates for instance, where $\alpha_{j_1,j_2}$ and $\sigma_{j_1,j_2}$ are the intercept and scale parameter in the robust model, respectively. One may then flag points with $|\widehat{z}_{i}^{j_1,j_2}| > 2.5$ (say), which is in line with classical recommandations (see \cite{GerviniYohai2002rewlse}). Note that the same principle applies for detecting outliers in the columns of $\mathbf{C}$ and, of course, in the multiple linear regressions used afterwards.

Finally note that the percentage of the total variation that is accounted for is capped at 95\%, as was the case with traditional PCA. The proposed method may lead to negative eigenvalues as robust correlation matrices are not correlation matrices per se. When this happens, we exclude the associated columns prior to setting $q$.

\subsection{Robust linear regressions} \label{sec_robust_reg}

The convergence result presented at the beginning of \autoref{sec_second_sit} ensures that posterior model probabilities and estimates of $(\sigma^K, \boldsymbol\beta^K)$ based on posterior quantiles (e.g.\ using posterior medians and Bayesian credible intervals) are robust to outliers. An analogous convergence result holds for the posterior expectations of the parameters, see \cite{gagnon2018regression}. Predictions for the dependent variable are then obtained by using 
$\sum_k \pi(k\mid \mathbf{y}) \, \mathbf{x}_{n+1,k}^T \, \widehat{\boldsymbol\beta}_k$ as in the nonrobust case, the difference being that probabilities and expectations are now computed with respect to the posterior arising from an LPTN error distribution. 
In Section~\ref{sec_RJ_PCR}, we describe the MCMC method used to approximate these probabilities and expectations; in Section \ref{sec_opt_impl_PCR}, we detail a procedure to efficiently implement this algorithm.

\subsubsection{Reversible Jump Algorithm}\label{sec_RJ_PCR}

As mentioned in \autoref{sec_pca_robust}, the price to pay for robustness is an increase in the complexity of the posterior. Parameters are however not restricted to a manifold in the linear regressions. Thus standard numerical approximation methods allow computing integrals with respect to posterior.
A commonly employed method for model selection and parameter estimation within the Bayesian paradigm is the RJ algorithm. This sampler allows simulation of the posterior distribution on spaces of varying dimensions, and can thus be used even if the number of parameters in the model is unknown.

The implementation of this sampler requires the specification of some functions, a step typically driven by the structure of the posterior. Recall that, whether there are outliers or not, the posterior under the super heavy-tailed LPTN distribution assumption has a structure similar to that expressed in \autoref{prop_posterior}. In other words, the regression coefficients should be nearly independent given $K$ and $\sigma^K$ and their values should not change dramatically from one model to another. In what follows, we borrow ideas from \cite{GAGNON201932}, in which an efficient RJ algorithm is built to sample from distributions with similar characteristics.

One iteration of the RJ sampler first randomly selects a model, and then proposes parameters for this model. This candidate model is then accepted as the next state of the Markov chain according to a specific probability; if it is rejected, the chain remains at the same state for another time interval. Specifically, given that the chain currently has $d_K$ components, the sampler that we use randomly selects one of three types of movements: update of the parameters; switch from Model $K$ to Model $K+1$ (with $d_{K+1} = d_K + 1$); switch from Model $K$ to Model $K-1$ (with $d_{K-1} = d_K - 1$).

The first step towards obtaining predictions is to identify the statistically relevant PCs. Recall that the individual contribution of each PC is evaluated by comparing the models $I_1=\{1 \}$ and $I_j =\{1, j\}$, $j=2, \ldots, d$. This first step of the statistical analysis requires $d-1=q$ runs of the RJ algorithm that can be performed in parallel. Performing model switches in those RJ samplers thus comes down to adding or withdrawing the $j$-th PC. Denote by $q^*$ the number of PCs associated to Bayes factors greater than the selected threshold; suppose that these statistically significant components are the $j_1$-th, $j_2$-th, \ldots, $j_{q^*}$-th PCs. The second step of the analysis then runs a single RJ sampler with $q^* +1$ nested models, ordered as follows : $I_1 = \{ 1 \}$ (intercept only), $I_2:=\{1,j_1\}$, \ldots, $I_{q^*+1}:=\{1,j_1,\ldots, j_{q^*}\}$. This ensures that the component added (removed) when switching models contains the most (least) information.

 The probability mass function used to randomly select the movement type at each iteration is
\begin{align}\label{def_g_pcr}
 g(j):=\begin{cases}
       \vartheta, \text{\quad if } j=1, \cr
       (1-\vartheta)/2, \text{\quad if } j=2,3,
      \end{cases}
\end{align}
where $0<\vartheta<1$ is a constant; the value of $\vartheta$ is discussed in Section \ref{sec_opt_impl_PCR}. At every iteration, an update of the parameters is thus attempted with probability $\vartheta$, while switches to Models $K+1$ and $K-1$ are attempted with probability  $(1-\vartheta)/2$ each.

Updating the parameters of Model $K$ is achieved here by using a $(d_K+1)$-dimensional proposal distribution centred around the current value of the parameter $(\sigma_K, \boldsymbol\beta_K)$ and scaled according to $\ell_K$, where $\ell_K$ is a positive constant given $K$. Each of the $d_K+1$ candidates is generated independently from the others, according to the one-dimensional strictly positive PDF $\varphi_i, i=1,\ldots,d_K+1$. Although the chosen PDF $\varphi_i$ usually is the normal density, we found the PDF in \eqref{eqn_log_pareto_pcr} to induce larger candidate steps and to result in a better exploration of the state space. We thus rely on this updating strategy in the analyses of \autoref{sec_real_data}. Note that one can easily simulate from \eqref{eqn_log_pareto_pcr} using the inverse transformation method.

A major issue with the design of RJ algorithms is that there might be a great difference between the ``good'' values of the parameters under Model $K$ and those under Model $K+1$ (or $K-1$). As explained in \autoref{sec_first_sit}, this is not a concern when there is no outlier, or when the same data points are diagnosed as outliers in Models $K$ and $K+1$; in these cases, the posterior under the LPTN is similar to that under normality. When observations are outliers with respect to Model $K$ but not Model $K+1$ (say), the posterior of Model $K$ is similar to that under normality excluding outliers, while the posterior of Model $K+1$ is similar to that under normality based on the whole sample. Therefore, when switching from Model $K$ to Model $K+1$, the parameters that were already in Model $K$ need to be moved to a position that is appropriate under Model $K+1$. Otherwise, this model switching will be less likely to be accepted, and the sampler will possibly require several iterations before the chain reaches high probability areas. Existing research has focused on that issue and found that it may result in inaccurate estimates, see \cite{brooks2003efficient}, \cite{al2004improving}, \cite{hastie2005towards}, and \cite{karagiannis2013annealed}.

Our strategy for resolving that issue is easily implemented. It consists in adding a vector $\mathbf{c}_{K+1}$ to the current parameters of Model $K$, so as to move these parameters to a suitable area under Model $K+1$. This leads to a candidate $(\sigma_{K+1},\boldsymbol\beta_{K+1}):=((\sigma_K,\boldsymbol\beta_{K})+\mathbf{c}_{K+1},u_{K+1})$ for Model $K+1$, where $(\sigma_K,\boldsymbol\beta_{K})$ is the current value of the parameter under Model $K$ and $u_{K+1}$ is a candidate for the added parameter $\beta_{d_{K+1},K+1}$, generated from an appropriate strictly positive PDF $q_{K+1}$. To avoid obtaining negative values for $\sigma_K$, we always set the first component of the vectors $\mathbf{c}_{i}$ to 0.

We now provide a pseudo-code to sample from $\pi(k,\sigma_k, \boldsymbol\beta_k\mid \mathbf{y})$ using the RJ sampler. In the next section, we specify the various inputs required to implement this algorithm.

 \begin{enumerate}
 \itemsep 0mm
  \item Initialise the sampler by setting $(K,\sigma_{K},\boldsymbol\beta_{K})(0)$. \newline \textit{Remark}:
  The number in parentheses beside a vector denotes the iteration.
  \item[\textbf{Iteration $m+1$.}\hspace{-25mm}]
  \item Generate $u\sim\mathcal{U}(0,1)$.
  \item[(a)\hspace{-4.5mm}] \hspace{4.5mm} If $u\leq \vartheta$, attempt an update of the parameters. Generate a candidate $\mathbf{w}_{K(m)}:=(w_1, \ldots, w_{d_{K(m)}+1})$, where $w_1\sim \varphi_1(\,\cdot \mid K(m),\sigma_{K}(m),\ell_{K(m)})$ and $w_i\sim \varphi_i(\,\cdot \mid K(m),\beta_{i-1,K}(m),\ell_{K(m)})$ for $i=2,\ldots,d_{K(m)}+1$. Generate $u_a  \sim\mathcal{U}(0,1)$; if
  $$
   u_a\leq \left(1\wedge\frac{(1/w_1)f(\mathbf{y}\mid K(m),\mathbf{w}_{K(m)})}{(1/\sigma_{K}(m))f(\mathbf{y}\mid (K,\sigma_{K},\boldsymbol\beta_{K})(m))}\right),
  $$
  where
  $$
   f(\mathbf{y}\mid k,\sigma_{k},\boldsymbol\beta_{k}):=\prod_{i=1}^n \frac{1}{\sigma_k}f\left(\frac{y_i-\mathbf{x}_{i,k}^T \boldsymbol\beta_k}{\sigma_{k}}\right),
  $$
  set $(K,\sigma_{K},\boldsymbol\beta_{K})(m+1)=(K(m),\mathbf{w}_{K(m)})$.
  \item[(b)\hspace{-4.5mm}] \hspace{4.5mm} If $\vartheta<u\leq \vartheta+(1-\vartheta)/2$, attempt adding a parameter to switch from Model $K(m)$ to Model $K(m)+1$. Generate  $u_{K(m)+1}\sim q_{K(m)+1}$ and $u_a \sim\mathcal{U}(0,1)$; if
    \begin{align*}
  u_a&\leq \left(1\wedge\frac{\pi(K(m)+1)f(\mathbf{y}\mid K(m)+1,(\sigma_{K},\boldsymbol\beta_{K})(m)+\mathbf{c}_{K(m)+1},u_{K(m)+1})}{\pi(K(m))f(\mathbf{y}\mid (K,\sigma_{K},\boldsymbol\beta_{K})(m))q_{K(m)+1}(u_{K(m)+1})}\right),
 \end{align*}
  set $(K,\sigma_{K},\boldsymbol\beta_{K})(m+1)=(K(m)+1,(\sigma_{K},\boldsymbol\beta_{K})(m)+\mathbf{c}_{K(m)+1},u_{K(m)+1})$.
  \item[(c)\hspace{-4.5mm}] \hspace{4.5mm} If $u> \vartheta+(1-\vartheta)/2$, attempt withdrawing the last parameter to switch from Model $K(m)$ to Model $K(m)-1$. Generate $u_a\sim\mathcal{U}(0,1)$; if
    \begin{align*}
  \hspace{-8mm} u_a&\leq\left(1\wedge\frac{\pi(K(m)-1)f(\mathbf{y}\mid K(m)-1,(\sigma_{K},\boldsymbol\beta_{K-})(m)-\mathbf{c}_{K(m)})q_{K(m)}(\beta_{d_K,K}(m))}{\pi(K(m))f(\mathbf{y}\mid (K,\sigma_{K},\boldsymbol\beta_{K})(m))} \right),
 \end{align*}
 where $(\sigma_{K},\boldsymbol\beta_{K-})(m):=(\sigma_{K},\beta_{1,K},\ldots, \beta_{d_K-1,K})(m)$, then
 set $(K,\sigma_{K},\boldsymbol\beta_{K})(m+1)=(K(m)-1,(\sigma_{K},\boldsymbol\beta_{K-})(m)-\mathbf{c}_{K(m)})$. 
 \item In case of rejection, set $(K,\sigma_{K},\boldsymbol\beta_{K})(m+1)=(K,\sigma_{K},\boldsymbol\beta_{K})(m)$.
 \item Go to Step 2.
 \end{enumerate}

It is easily verified that the resulting stochastic process $\{(K,\sigma_{K},\boldsymbol\beta_{K})(m):m\in\na\}$ is a $\pi(k,\sigma_k,\boldsymbol\beta_k\mid \mathbf{y})$-irreduci\-ble and aperiodic Markov chain. Furthermore, it satisfies the reversibility condition with respect to the posterior, as stated in the following proposition. Therefore, it is an ergodic Markov chain, which guarantees that the Law of Large Numbers holds.

\begin{Proposition}\label{prop_reversibility}
The Markov chain $\{(K,\sigma_{K}, \boldsymbol\beta_{K})(m):m\in\na\}$ arising from the RJ described above satisfies the reversibility condition with respect to the posterior $\pi(k,\sigma_k, \boldsymbol\beta_k\mid \mathbf{y})$.
\end{Proposition}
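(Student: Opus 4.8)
\emph{Overview.} The plan is to establish the detailed balance (reversibility) identity separately for each of the three move types generated by $g$ in \eqref{def_g_pcr}, and then to invoke the elementary fact that a mixture of $\pi$-reversible Markov kernels is again $\pi$-reversible. Throughout, I would write the target density as $\pi(k,\sigma_k,\boldsymbol\beta_k\mid\mathbf{y})\propto \pi(k)\,(1/\sigma_k)\,f(\mathbf{y}\mid k,\sigma_k,\boldsymbol\beta_k)$, using the conditional prior $\pi(\sigma_k,\boldsymbol\beta_k\mid k)\propto 1/\sigma_k$ of \autoref{prop_posterior} and the likelihood $f(\mathbf{y}\mid k,\sigma_k,\boldsymbol\beta_k)$ defined in the pseudo-code.

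\emph{Within-model update (move (a)).} Each candidate component $w_i$ is drawn from $\varphi_i$ centred at the corresponding current component of $(\sigma_K,\boldsymbol\beta_K)(m)$, and both the normal density and the density \eqref{eqn_log_pareto_pcr} are symmetric about their centre; hence the proposal kernel is symmetric, $q(\mathbf{w}_K\mid\sigma_K,\boldsymbol\beta_K)=q((\sigma_K,\boldsymbol\beta_K)\mid\mathbf{w}_K)$. Since $K$ does not change, $\pi(K)$ cancels, and the acceptance probability displayed in move (a) is exactly the Metropolis ratio $1\wedge\{[(1/w_1)f(\mathbf{y}\mid K,\mathbf{w}_K)]/[(1/\sigma_K)f(\mathbf{y}\mid(K,\sigma_K,\boldsymbol\beta_K)(m))]\}$ of the target densities, including the $1/\sigma$ prior factor. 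The standard one-line Metropolis computation then yields the detailed balance identity for this sub-kernel.

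\emph{Birth/death pair (moves (b) and (c)).} Here I would use the change-of-variables framework of \cite{green1995reversible}. For the up-move from Model $K$ to Model $K+1$, the auxiliary variable is $u_{K+1}\sim q_{K+1}$ and the bijection between $(\sigma_K,\boldsymbol\beta_K,u_{K+1})$ and the Model-$(K+1)$ parameter is $\Psi:(\sigma_K,\boldsymbol\beta_K,u_{K+1})\mapsto((\sigma_K,\boldsymbol\beta_K)+\mathbf{c}_{K+1},\,u_{K+1})$, whose inverse is the deterministic map employed in move (c); because $\Psi$ translates the first $d_K+1$ coordinates by the fixed vector $\mathbf{c}_{K+1}$ and acts as the identity on the last coordinate, its Jacobian determinant equals $1$. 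The move-selection probabilities $g(2)$ and $g(3)$ are equal and cancel. The key simplification is that, by the convention that the first entry of every $\mathbf{c}_i$ is $0$, the scale is unchanged under $\Psi$, so the prior factors $1/\sigma_K$ and $1/\sigma_{K+1}$ cancel in the ratio of targets; what remains is $[\pi(K+1)f(\mathbf{y}\mid K+1,(\sigma_K,\boldsymbol\beta_K)+\mathbf{c}_{K+1},u_{K+1})]/[\pi(K)f(\mathbf{y}\mid(K,\sigma_K,\boldsymbol\beta_K))\,q_{K+1}(u_{K+1})]$, which is precisely the ratio in move (b); the ratio in move (c) is its reciprocal, with $q_{K+1}$ now evaluated at the dropped coordinate. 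Writing $\alpha_b,\alpha_d$ for the two acceptance probabilities, one then checks directly that $\pi(K,\cdot\mid\mathbf{y})\,g(2)\,q_{K+1}(u_{K+1})\,\alpha_b=\pi(K+1,\cdot\mid\mathbf{y})\,g(3)\,\alpha_d$, i.e.\ the detailed balance identity for the combined birth/death sub-kernel.

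\emph{Conclusion.} Adding the holding probability from the rejection step does not affect reversibility, and the full RJ transition kernel is the $g$-weighted mixture of the update sub-kernel and the combined birth/death sub-kernel, each shown above to be $\pi$-reversible; a mixture of $\pi$-reversible kernels is $\pi$-reversible, which establishes the claim. The only genuinely delicate point is the bookkeeping in the birth/death step: identifying the auxiliary variable, verifying the unit Jacobian, and noticing that the $1/\sigma$ prior factors cancel thanks to the zero-first-component convention on $\mathbf{c}_i$; once these are in place, the remainder is the routine Metropolis and change-of-variables algebra.
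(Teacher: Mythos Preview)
Your proposal is correct and follows essentially the same route as the paper's proof: both verify detailed balance separately for the within-model update and for the paired birth/death moves, the former via symmetry of the proposal densities $\varphi_i$ and the latter via the change of variables $(\sigma_K,\boldsymbol\beta_K,u_{K+1})\mapsto((\sigma_K,\boldsymbol\beta_K)+\mathbf{c}_{K+1},u_{K+1})$ with unit Jacobian. Your write-up is in fact slightly more explicit than the paper's in flagging the key bookkeeping points (the unit Jacobian, the equality $g(2)=g(3)$, and the cancellation of the $1/\sigma$ prior factors due to the zero first component of $\mathbf{c}_{K+1}$), whereas the paper simply writes out the integrals and applies Fubini and the substitution directly.
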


\begin{proof}
 See the supplementary material (\autoref{sec_supp}).
\end{proof}

 \subsubsection{Efficient implementation}\label{sec_opt_impl_PCR}

An optimal implementation of the RJ algorithm described above requires carefully selecting the various inputs: the PDFs $q_i$, the constants $\vartheta$ and $\ell_i$, and the vectors $\mathbf{c}_i$. Hereafter, ``optimal implementation'' or ``optimal design'' means that the generated Markov chain mixes as rapidly as possible, thus engendering least variable estimators.

In \cite{GAGNON201932}, a posterior structure similar to that expressed in \autoref{prop_posterior} is considered, and theoretical results leading to an optimal RJ algorithm are obtained. In that paper, the parameters of any given model are conditionally independent and identically distributed. An implicit assumption on the posterior studied is that distributions of parameters remain the same when switching from Model $K$ to Model $K+1$ (or $K-1$). The authors find asymptotically optimal values for $\vartheta$ and $\ell_K$ (as the number of parameters approaches infinity). They conjecture that their results are valid (to some extent) when the parameters are conditionally independent, but not identically distributed (for any given model). They also provide guidelines to suitably design the PDFs $q_K$. 
We use these results as a starting point in the design of our RJ algorithm.

In the settings of \cite{GAGNON201932}, the asymptotically optimal value for $\vartheta$ depends on the PDFs $q_i$. It is also empirically observed that for moderate values of $\text{K}_{\text{max}}$, selecting any value between 0.2 and 0.6 is almost optimal. We use $\vartheta:=0.6$ in the numerical analyses of \autoref{sec_real_data}, as $\text{K}_{\text{max}}$ is rather small (there are few models to visit). Generally speaking, larger values of $\vartheta$ leave the chain more time for exploring the parameters' state space between model switches. Based on several runs of the RJ algorithm, $\vartheta:=0.6$ is in fact nearly optimal for the data in \autoref{sec_real_data}.

If the parameters $(\sigma_K,\boldsymbol\beta_K)$ were independent and identically distributed for each model, the asymptotically optimal value for $\ell_K$ would be $\ell/\sqrt{d_K+1}$, with $\ell$ tuned to accept 23.4\% of candidates $\mathbf{w}_{K}$. When these assumptions are violated, the asymptotically optimal value for $\ell$ usually corresponds to an acceptance rate smaller than $0.234$ (see \cite{bedard2007weak} and \cite{bedard2017hierarchical}).
Considering this, and adding the fact that $d_{k}$ may be rather small, we recommend to perform trial runs to identify optimal values for all $\ell_k$. We use the $0.234$ rule within each model to initiate the process. In our analyses in \autoref{sec_real_data}, the optimal values for all $\ell_k$ correspond to an acceptance rate relatively close to $0.234$. 

We propose to specify the PDFs $q_i$ and vectors $\mathbf{c}_i$ through trial runs as well. Specifying these functions and vectors requires information about locations and scalings of regression coefficients for all models. We gather this information by running a random walk Metropolis algorithm for each model; this sampler may be seen as a RJ algorithm in which $\vartheta:=1$ (i.e.\ a sampler in which only updates of the parameters are proposed). The recommended procedure is now detailed.

\begin{enumerate}
\itemsep 0mm
 \item[\textbf{For each} $k\in\{1,\ldots,\text{K}_{\text{max}}\}$\textbf{:}\hspace{-44mm}]
 \item Tune $\ell_k$ such that the acceptance rate of candidates $\mathbf{w}_{k}$ is approximately $0.234$; denote this value by $\ell_k^{\text{start}}$.
 \item Select a sequence of values around $\ell_k^{\text{start}}$: $(\ell_{1,k},\ldots,\ell_{j_0,k}:=\ell_k^{\text{start}}, \ldots, \ell_{L,k})$, where $L$ is a positive integer.
 \item For each $\ell_{j, k}$, run a random walk Metropolis sampler initialised as follows: $(\sigma_{k}(0))^2\sim \text{Inv-}\Gamma$ with shape and rate given by $(n-d_k)/2$ and $\|\mathbf{y} - \widehat{\mathbf{y}}_k\|_2^2/2$, respectively; $\beta_{1,k}(0)\sim\mathcal{N}(\widehat{\beta}_{1,k}, (\sigma_{k}(0))^2/n)$, and $\beta_{j,k}(0)\sim\mathcal{N}(\widehat{\beta}_{j,k},(\sigma_{k}(0))^2/(n-1))$, $j=2,\ldots,d_k$ (if $k\geq 2$). Here, $\widehat{\mathbf{y}}_k$ is computed using a preliminary robust estimate  $\widehat{\boldsymbol\beta}:=(\widehat{\beta}_{1,k},\ldots, \widehat{\beta}_{d_k,k})$ (MAP estimate under the robust LPTN model for instance).
 \item For each $\ell_{j, k}$ ($j=1, \ldots, L$), estimate the location and scaling of each $\beta_{i,k}$ using the runs in Step (3). In particular, compute the mean (denoted by $m_{i,j}^k$) and standard deviation (denoted by $s_{i,j}^k$) of $\{\beta_{i,k}(m):m\in\{B+1,\ldots,T\}\}$, for $i=1,\ldots,d_k$, where $B$ is the length of the burn-in period and $T$ the number of iterations. Repeat for $\sigma_k$, denoting the means and standard deviations by $m_{\sigma,j}^k$ and $s_{\sigma,j}^k$. Measure the efficiency of the sampler with respect to $\ell_{j,k}$ using the sum of the integrated autocorrelation times (IAT) of $\{\sigma_{k}(m):m\in\{B+1,\ldots,T\}\}$ and $\{\beta_{i, k}(m):m\in\{B+1,\ldots,T\}\}$ for $i=1,\ldots,d_k$. Record the value $\ell_k^{\text{opt}}$ corresponding to the smallest IAT.
 \item If $\ell_k^{\text{opt}}$ corresponds to the lower or upper bound of the range defined in Step (2), i.e.\ $\ell_{1,k}$ or $\ell_{L,k}$, change the sequence of values for $\ell_k$ and repeat.
 \item For $i=1,\ldots,d_k$, compute the average of $\{m_{i,1}^k,\ldots,m_{i,L}^k\}$ (denoted by $m_{i,k}$) and $\{s_{i,1}^k,\ldots, s_{i,L}^k\}$ (denoted by $s_{i,k}$). Also compute the average of $\{m_{\sigma,1}^k,\ldots,m_{\sigma,L}^k\}$ (denoted by $m_{\sigma,k}$) and $\{s_{\sigma,1}^k,\ldots,s_{\sigma,L}^k\}$ (denoted by $s_{\sigma,k}$).
\end{enumerate}

These runs can be performed in parallel for computational efficiency, in an automatic procedure that allows users to retrieve the desired output at the end. Using this output, set $q_j$ ($j=2,\ldots,\text{K}_{\text{max}}$) equal to the distribution in (\ref{eqn_log_pareto_pcr}), with location and scale parameters given by $m_{d_j,j}$ and $s_{d_j,j}$, respectively. Also set $\mathbf{c}_{j}:=(0,m_{1,j}-m_{1,j-1},\ldots,m_{d_j-1,j}-m_{d_j-1,j-1})^T$, $j=2,\ldots,\text{K}_{\text{max}}$, and $\ell_k$ equal to $\ell_k^{\text{opt}}$ for all $k$.

The only inputs left to choose before implementing the RJ algorithm are the initial values for the model indicator and parameters. We recommend to generate $K(0)\sim \mathcal{U}\{1,\ldots,\text{K}_{\text{max}}\}$, $\sigma_{K}(0)$ from a normal truncated at 0 with mean $m_{\sigma,K(0)}$ and standard deviation $s_{\sigma,K(0)}$, and $\beta_{j,K}(0)\sim\mathcal{N}(m_{j,K(0)},s_{j,K(0)}^2)$, $j=1,\ldots,d_{K(0)}$. The analyses of \autoref{sec_real_data} rely on sequences of length $L=11$ for $\ell_k$ ($\ell_k^{\text{start}}$ is the median), $T=\,$100,000 iterations, and a burn-in period of length $B=\,$10,000 for the trial runs. When running the RJ sampler, we use 1,000,000 iterations and a burn-in period of length 100,000.

\section{Case study: prediction of returns for the S\&P 500}\label{sec_real_data}

In this section, we illustrate the performance of our robust approach on a real data set containing outliers. We provide a detailed analysis of the results and contrast them with those from other approaches to identify in which situations it is expected to perform better. The data set and context are described in \autoref{sec_data_des}, the competitors are presented in \autoref{sec_competitors}, while the section finishes with the result analysis and comparison in \autoref{sec_results}.

The use of super heavy-tailed distributions in linear regression has recently been introduced in \cite{DesGag2019}, where the special case of simple linear regressions through the origin was studied. The usual linear regression model was later analysed in \cite{gagnon2018regression}. Although theoretical results about model selection are presented in \cite{gagnon2018regression}, it is the first time that an illustration of the practical benefits is presented in that context.

\subsection{Data set and context description}\label{sec_data_des}

In this example, we model the January 2011 daily returns of the S\&P 500 by exploiting their potential linear relationship with some financial assets and indicators. We next use the estimated models to predict the February 2011 daily returns of this stock index; to this end, covariate observations on day $i$ will be used to predict the return of the S\&P 500 on day $i+1$. A detailed list of the 18 covariates considered is provided in the supplementary material (\autoref{sec_supp}); $n=19$ observations are available for model estimation. The full linear regression model with all covariates would have 20 parameters ($p=18$ regression coefficients for the covariates, to which we add the intercept and scale parameter). We perform robust and nonrobust PCA procedures, which are expected to be beneficial given that financial assets and indicators are likely to carry redundant information.

\subsection{Competitors}\label{sec_competitors}

The results are compared with those obtained under the normality of errors assumption (nonrobust Bayesian approach) to evaluate outlier protection performance. The classical frequentist approach and the robust frequentist approach of \cite{hubert2003robust} are also included in the comparison. The implementation of a robust frequentist approach allows contrasting the effects of our Bayesian robust PCA decomposition. The proposed model-based PC selection approach is also evaluated. 

In principle, to construct a PCR approach, one only needs a PCA and a linear regression method. There are of course many combinations of PCA and linear regression approaches possible. To keep the analysis and comparison simple, we restrict our attention to a single combination of these methods 
for each of the four classes of PCR approaches considered (robust/nonrobust and Bayesian/frequentist). The four combinations selected are arguably the best approaches in each of the four classes. In the robust frequentist approach, we use MM-regression \citep{Yohai1987MM}  as it offers one of the best available asymptotic \textit{breakdown point versus efficiency} tradeoffs.  
We nevertheless acknowledge that there exist several other good combinations; one could, for instance, use least trimmed squares estimators (LTS, \cite{Rousseeuw1985LTS}), M-estimators \citep{huber1973robust}, S-estimators \citep{rousseeuw1984robust}, or other more recent robust regression approaches like those of  \cite{agostinelli2013weighted} and \cite{atkinson2017robust}. We refer the reader to the recent robust regression comparisons presented in \cite{gagnon2018regression} and \cite{yuyao2017review}, which respectively focus on Bayesian and frequentist methods.

Model estimation is performed using each of the four mentioned approaches. In the robust and nonrobust Bayesian approaches, statistically significant PCs are identified by relying on Bayes factors with a threshold of 1. This means that when on average (over the parameter space) an individual PC improves the fit over the model consisting solely of the intercept, then it is included in the second stage of the statistical analysis (for building the nested models).

\subsection{Results and analysis}\label{sec_results}

 The average absolute deviations (AAD) between the predicted and actual February 2011 returns are reported in the first column of \autoref{table:realdata}. In the current financial context, it may also be of interest to predict whether the asset (S\&P 500 in our case) will go up or down the following day. Using the sign of our predicted returns, we find to be correct 10, 13, 10, and 11 times out of 19 under the normal, LPTN, classical and robust frequentist approaches; the success rates are reported in the second column of \autoref{table:realdata}.

\begin{table}[ht]
\begin{center}
\begin{tabular}{l rr}
\toprule
\textbf{Approach} &  \textbf{AAD} & \textbf{Sign prediction rate} \cr
\midrule
 Normal errors (nonrobust Bayesian) & 0.60 & 0.53 \cr
 LPTN errors (robust Bayesian) & 0.49 & 0.68 \cr
 Classical frequentist & 0.63 & 0.53 \cr
 Robust frequentist & 0.57 & 0.58 \cr
 \bottomrule
\end{tabular}
\end{center}
\vspace{-5mm}
 \caption{Prediction results for the February 2011 daily returns of the S\&P 500, using robust and nonrobust versions of Bayesian and frequentist approaches}\label{table:realdata}
\end{table}

We know that differences in the results obtained from the normal and LPTN models are essentially due to the presence of outliers. The outlier detection method described in \autoref{sec_pca_robust} indeed flags several observations, both in the PCA and linear regression steps. Each graph in \autoref{fig_2} illustrates linear relationships between a different pair of covariates;  the two linear relationships in a given graph are established using the normal and LPTN error distributions. We see that in the presence of outliers, the choice of error distribution obviously has a large impact on the trends obtained from the data. \autoref{fig_4} also depicts linear relationships using the normal and LPTN distributions, but this time between the dependent variable and some PCs. Specifically, 
the graphs on the top line picture linear relationships between the dependent variable and the second PC; in the left graph, the second PC was constructed using a robust PCA while in the right graph, that same PC was constructed using a traditional PCA. The exercise is then repeated with the fourth PC and produces the two graphs on the bottom line.

\begin{figure}[ht]
  \centering
  $\begin{array}{ccc}
   \hspace{-4mm} \includegraphics[width=0.36\textwidth]{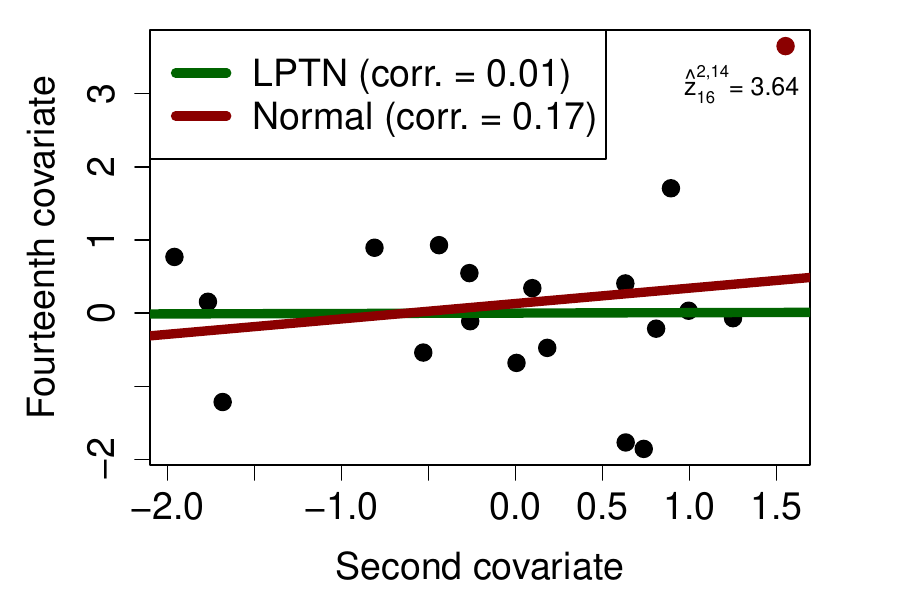} & \hspace{-7mm} \includegraphics[width=0.36\textwidth]{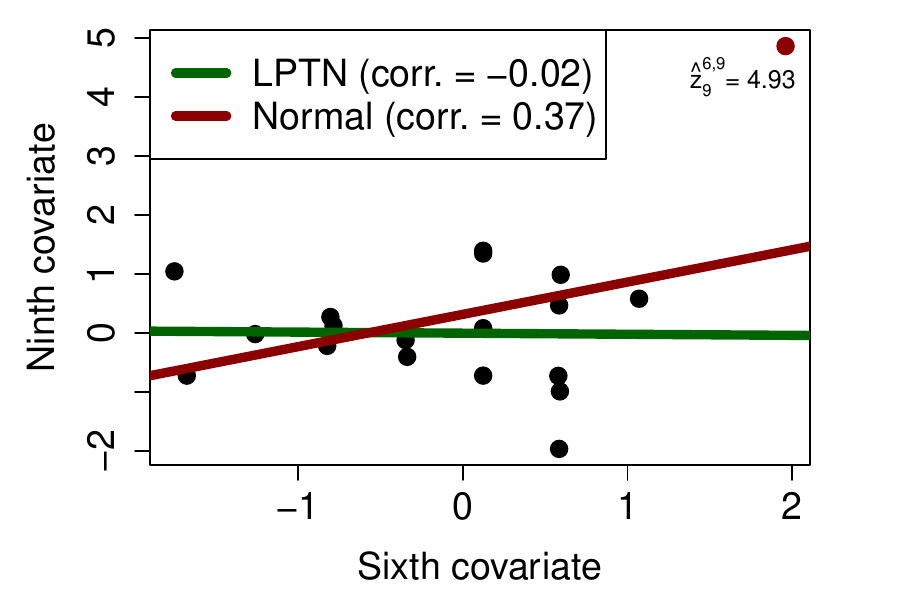} & \hspace{-7mm} \includegraphics[width=0.36\textwidth]{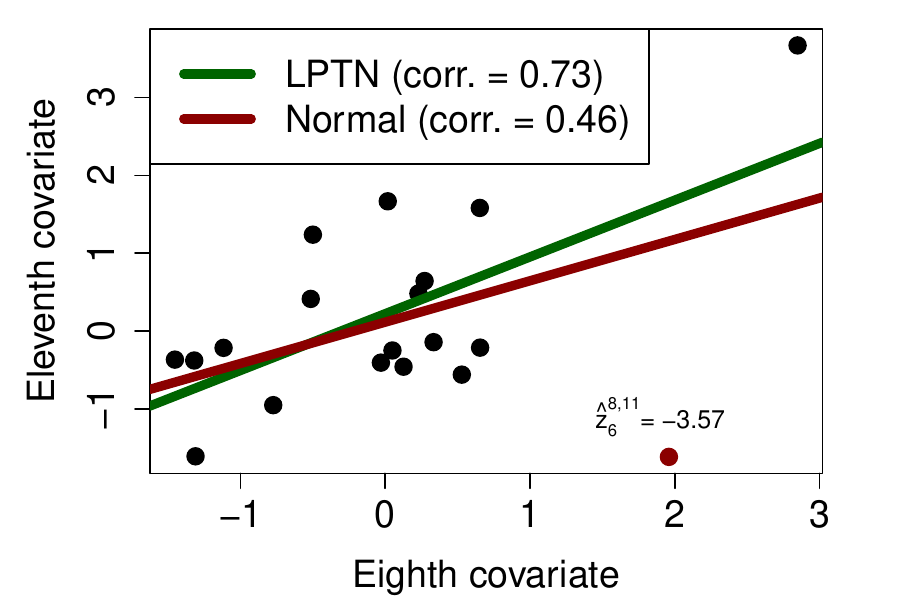} \cr
   \textbf{(a)} & \hspace{-5mm} \textbf{(b)} & \hspace{-5mm} \textbf{(c)}
  \end{array}$
  \vspace{-3mm}
\caption{Linear relationships between the (a) second and fourteenth covariates; (b) sixth and ninth covariates; (c) eighth and eleventh covariates}\label{fig_2}
 \end{figure}

From these figures, it is clear that contaminated correlation estimation in traditional PCA leads to an inferior assessment of the relationships between covariates (\autoref{fig_2}), which in turn leads to a different way of constructing the PCs (\autoref{fig_3}). By \textit{inferior}, we mean here that the trend does not reflect the behaviour of the majority of the observations, but rather consists in a poor compromise between that behaviour and the behaviour of outliers. The left graph of \autoref{fig_3} plots the differences arising from applying a robust PCA rather than a traditional one when computing the covariates weights used to construct the second PC. 
The right graph repeats the exercise for the fourth PC, which leads to even greater differences than for the second PC.

\begin{figure}[ht]
  \centering
  $\begin{array}{cc}
    \includegraphics[width=0.36\textwidth]{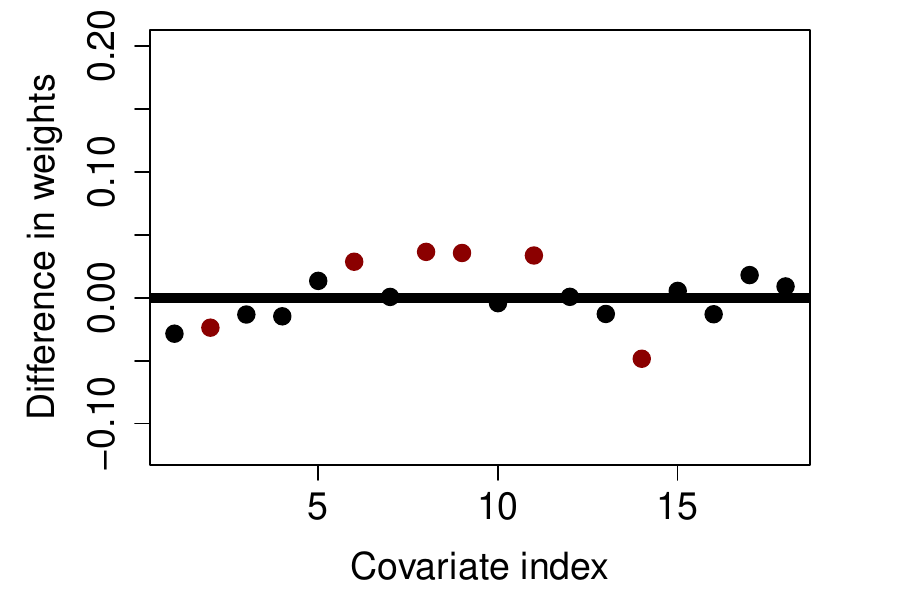} & \includegraphics[width=0.36\textwidth]{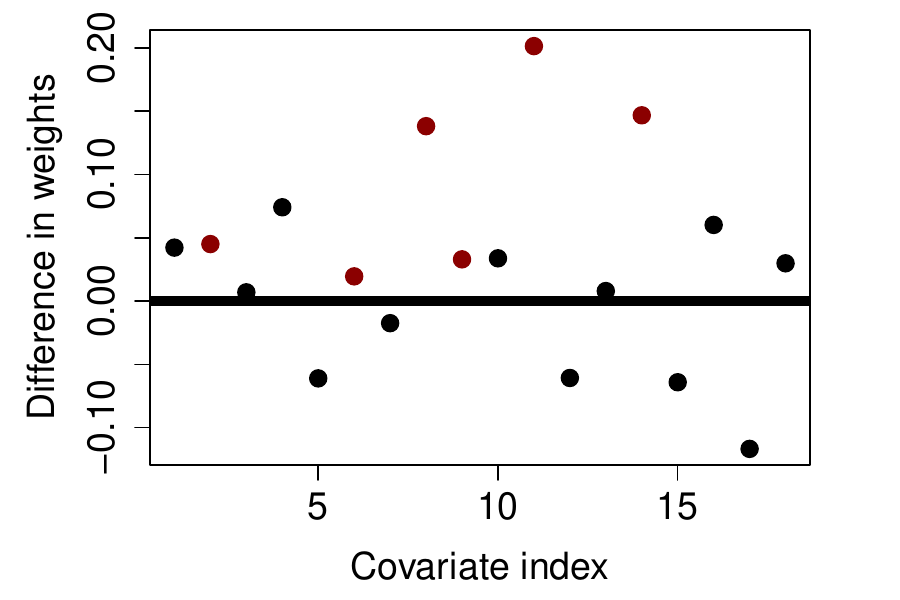} \cr
    \textbf{(a) Second PC} & \textbf{(b) Fourth PC}
  \end{array}$
  \vspace{-3mm}
\caption{Differences in covariate weights used to construct the (a) second PC and (b) fourth PC, when comparing the robust and traditional PCA; the six covariates considered in \autoref{fig_2} are shown in red (2nd, 6th, 8th, 9th, 11th, and 14th covariates)}\label{fig_3}
 \end{figure}

 \begin{figure}[ht]
  \centering
  $\begin{array}{cc}
    \includegraphics[width=0.36\textwidth]{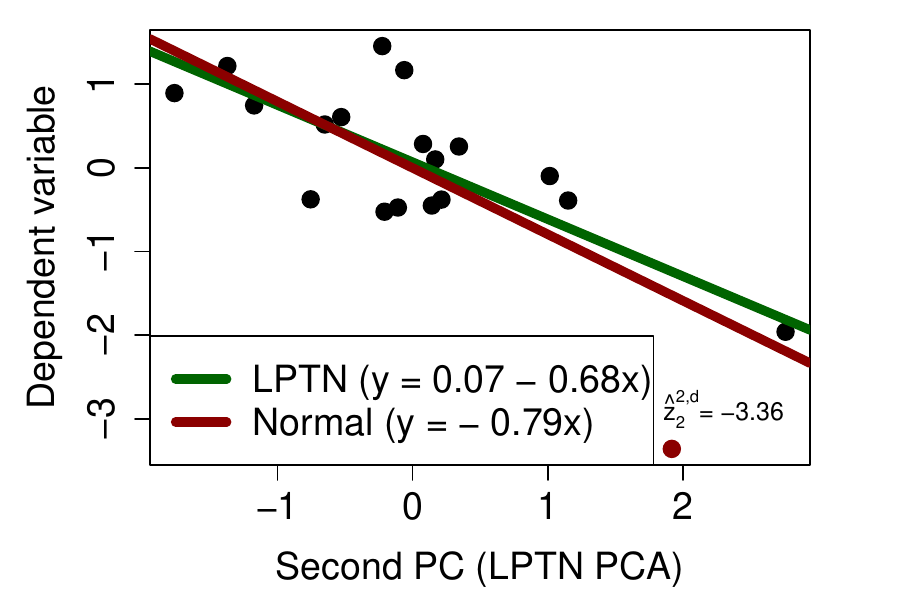} & \includegraphics[width=0.36\textwidth]{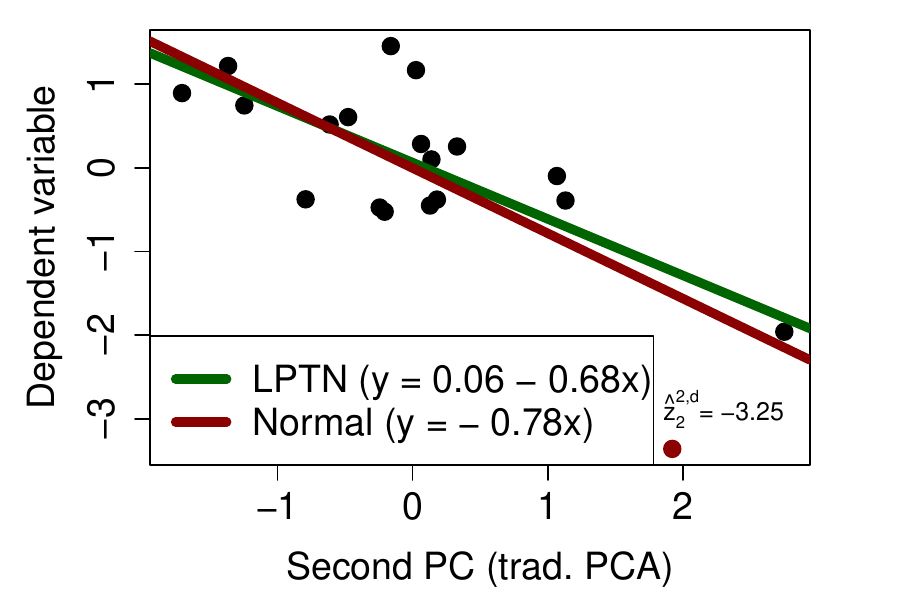} \cr
    \textbf{(a)} & \textbf{(b)} \cr
    \includegraphics[width=0.36\textwidth]{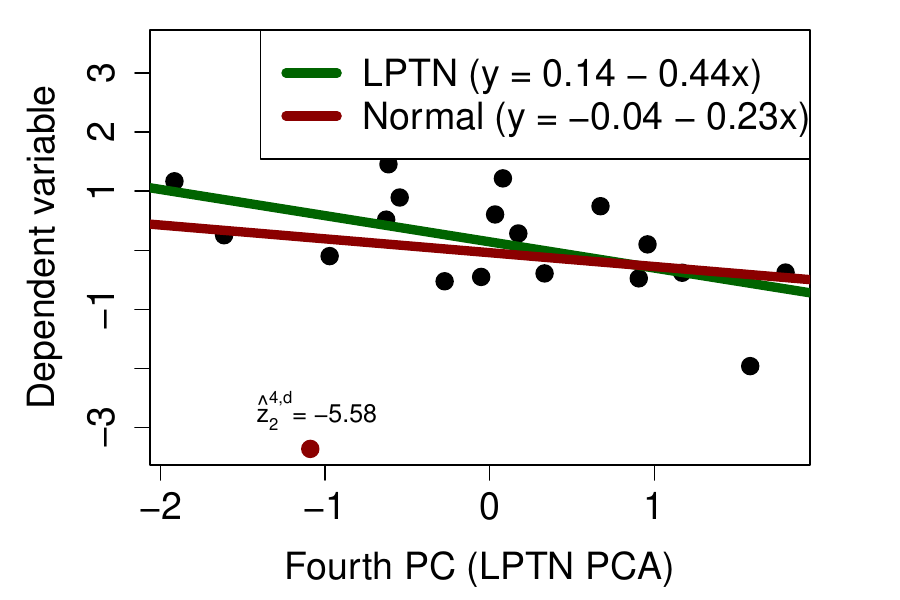} & \includegraphics[width=0.36\textwidth]{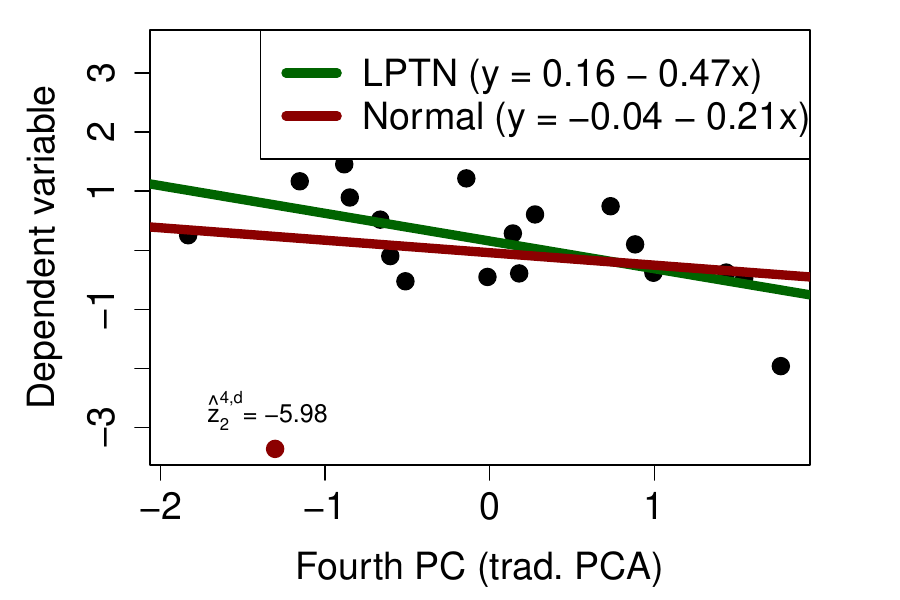} \cr
    \textbf{(c)} & \textbf{(d)} \cr
  \end{array}$
  \vspace{-3mm}
\caption{Linear relationships between the dependent variable and second PC constructed using the (a) robust PCA and (b) traditional PCA; linear relationships between the dependent variable and fourth PC constructed using the (c) robust PCA and (d) traditional PCA}\label{fig_4}
 \end{figure}

These discrepancies are ultimately seen to have an impact on the estimated linear regressions (\autoref{fig_4}).  In particular, combining the traditional PCA (instead of the robust one) with LPTN regression models gives an average absolute deviation of $0.53$ (instead of the value $0.49$ appearing in \autoref{table:realdata}). When implementing the robust PCR, only the second and fourth PC end up being retained for the second stage of the statistical analysis. This means that the models corresponding to $I_1:=\{1\}$ (intercept only), $I_2:=\{1,2\}$, and $I_3:=\{1,2,4\}$ are the only models considered in that stage. Their posterior probabilities each are $0.00$, $0.18$, and $0.82$. Note that the fourth PC is not selected by the normal (nonrobust) approach.


The difference between Bayesian and frequentist robust approaches not only resides in the selection of statistically significant PCs (subsequently used in prediction), but also in the PCA decomposition employed. In fact, if the robust frequentist approach were to use the same regressors as the Bayesian one, and then retaining the usual approach for selecting PCs and estimating parameters, the results would be (essentially) the same as those of the robust Bayesian approach. The results would also be (essentially) the same if the exact same PCs as those in  the Bayesian approach were selected (the frequentist approach selects an additional one, as explained in the next paragraph). Our analysis thus shows that, in our example, the proposed PCA represents the information from the covariates in lower dimensional spaces in a way that is more suitable to predict the dependent variable.

Our analysis also shows that, in the current example, the Bayesian approach for selecting statistically significant PCs does not dominate the frequentist one, and vice versa. The frequentist approach however leads to an overparameterised model, an undesirable characteristic. Indeed, frequentist approaches use cross-validation, along with a robust prediction measure in the case of the robust method. Models with an increasing number of PCs are thus evaluated (the PCs are ordered, and included in the models according to this predetermined order); the model enjoying the best fit is then used for prediction. If we apply this method using the PCs obtained from our robust PCA for instance, we find the model with four PCs to be the best option, and so according to this analysis, the first four PCs should be used for prediction. 
If we consider a larger class of models instead of being forced to include PCs in a predetermined manner, we however find through cross-validation methods that the model with the first, second and fourth PCs performs better (as pointed out by the proposed Bayesian method). Indeed, the third PC does not significantly explain the variability in the dependent variable; it may thus negatively influence the model's ability to generalise (see \cite{hadi1998cautionary} and \cite{jolliffe1982usePCR} for examples).

We note that the frequentist approach could be modified so as to include PCs on the merit of their individual contribution. To this effect, the Bayesian information criterion (BIC) could be used to evaluate the individual contributions of the PCs, after which the frequentist method could be applied on the retained ordered components only; the resulting set of PCs would be the one used in prediction. Similarly, our robust Bayesian PCR could also be applied under the frequentist paradigm.

We also ran simulations and drew the same conclusions as in this section. We thus do not present them for brevity. We only note that the robust approaches (Bayesian and frequentist) are expected to be efficient by their nature, and this is what we observed; they perform only slightly worse than their nonrobust counterparts when there are no outliers.

\section{Conclusion and further remarks}\label{sec_conclusion_pcr}

In light of the results of \autoref{sec_real_data}, we conclude that the proposed robust Bayesian PCR approach is expected to perform better than its competitors (at least those that are nonrobust) when there are outliers in the data set (either among the covariates or the dependent variable). This is a consequence of the new class of super heavy-tailed PCA models, combined to the LPTN regressions of \cite{gagnon2018regression}. The approach is also expected to perform better when the first $q$ PCs do not all contribute in explaining the variability of the dependent variable. The approach indeed takes advantage of the linear independence of the PCs to effectively exclude the components that are not relevant, and next forms a sequence of nested models from which predictions are produced and averaged out according to the model posterior probabilities.


As explained in \autoref{sec_pca_robust}, the robust PCA applied to the real data of \autoref{sec_real_data} is, in reality, an approximation to the exact wholly robust PCA model. Further research is needed to acquire a deeper understanding of its theoretical properties, as well as to develop an efficient implementation method. It would be particularly useful to obtain a robust procedure that not only reduces dimensionality, but also induces sparsity to deal with cases where $p\gg n$.

\section*{Acknowledgements}

The authors acknowledge support from NSERC (Natural Sciences and Engineering Research Council of Canada), FRQNT (Le Fonds de recherche du Québec - Nature et technologies) and SOA (Society of Actuaries). The authors thank two anonymous referees and an associate editor for suggestions that led to an improved paper.

\bibliographystyle{rss}
\bibliography{reference}

\section{Supplementary material}\label{sec_supp}

We first present the mathematical justification of the approximate robust principal component analysis (PCA) in \autoref{sec_just_PCA}. The validity of our prior structure is next addressed in \autoref{sec_validity_prior}. Propositions~\ref{prop_posterior} and \ref{prop_reversibility} are proved in \autoref{sec_proofs}. The list of the explanatory variables considered in the real data analysis in \autoref{sec_real_data} of our paper is provided in \autoref{sec_appendix}.

\subsection{Mathematical justification of the approximate robust PCA}\label{sec_just_PCA}

See Section 4.1 of our paper for the definition of notation. Given that the LPTN matches the normal distribution everywhere except in the tails, the limiting posterior of $(\tilde{\mathbf{Z}}_{q}, \tilde{\mathbf{L}}_q, \tilde{\mathbf{A}}_q,\eta)$ based on the exact robust PCA is similar to that arising from the traditional PCA model with normal errors based on $\mathbf{C}^*$, as the outliers moves away from the general trend. This means that the exact robust PCA applied to $\mathbf{C}$ leads to essentially the same singular value decomposition as a traditional PCA applied to $\mathbf{C}^*$ (in the limit). The approximate robust PCA method relies on this equivalence.

The first step in performing an approximate robust PCA is to obtain $\mathbf{C}$ by standardising the columns of the original data set. Location and scale estimates $\widehat{\mu}_j$ and $\widehat{\sigma}_j$ are thus used to standardise Column $j$, $j=1,\ldots,p$. Relying on a robust location-scale model as in \cite{desgagne2015robustness}, with an LPTN error distribution and $\rho:=0.95$, ensures that $(\widehat{\mu}_j, \widehat{\sigma}_j)\longrightarrow(\widehat{\mu}_{j}^{-\mathcal{O}}, \widehat{\sigma}_{j}^{-\mathcal{O}})$, where $(\widehat{\mu}_{j}^{-\mathcal{O}}, \widehat{\sigma}_{j}^{-\mathcal{O}})$ are estimates based on nonoutliers only. Note that the robust location-scale model is the linear regression model with only the intercept. For large $n$, we also have $(\widehat{\mu}_{j}^{-\mathcal{O}}, \widehat{\sigma}_{j}^{-\mathcal{O}})\approx (\widehat{\mu}_{j}^*, \widehat{\sigma}_{j}^*)$, where $(\widehat{\mu}_{j}^*, \widehat{\sigma}_{j}^*)$ are the sample mean and standard deviation obtained from $\mathbf{C}^*$, which is based on the normality of errors and in which outliers are replaced by their vertical projection. Denote by $c_{ij}^{\mathcal{O}}$ the outlying values; they are excluded for the estimation of $(\widehat{\mu}_{j}^{-\mathcal{O}}, \widehat{\sigma}_{j}^{-\mathcal{O}})$ and replaced by their vertical projection, denoted by $c_{ij}^*$, for the estimation of $(\widehat{\mu}_{j}^*, \widehat{\sigma}_{j}^*)$.
 Provided that $n$ is large enough, the impact of those points on the sample mean and standard deviation will indeed be negligible; furthermore, it was previously argued that estimates obtained under LPTN and normal error distributions are similar. The resulting matrices $\mathbf{C}$ and $\mathbf{C}^*$ are the same in the limit, except on lines containing outliers. 

The second step in performing the approximate robust PCA consists in computing robust correlations between all pairs of columns in $\mathbf{C}$. We know that the correlation between the standardised columns $j_1$ and $j_2$ of $\mathbf{C}^*$  is $\widehat{\beta}_{j_1,j_2}^\mathcal{N}$, the OLS slope estimate. We are interested in comparing the robust slope estimator (applied to columns of the matrix $\mathbf{C}$) to $\widehat{\beta}_{j_1,j_2}^\mathcal{N}$. When using a robust regression model as in \cite{gagnon2018regression} with an LPTN error distribution and $\rho:=0.95$, we find $\widehat{\beta}_{j_1,j_2}\longrightarrow \widehat{\beta}_{j_1,j_2}^{-\mathcal{O}}$, where $\widehat{\beta}_{j_1,j_2}^{-\mathcal{O}}$ is the robust slope estimate obtained using nonoutliers only. Again, for large $n$, we find $\widehat{\beta}_{j_1,j_2}^{-\mathcal{O}}\approx \widehat{\beta}_{j_1,j_2}^\mathcal{N}$. The robust correlation matrix obtained from $\mathbf{C}$ is thus asymptotically equal to the correlation matrix obtained from $\mathbf{C}^*$. Its diagonal elements are equal to 1; for simplicity, we set the upper diagonal entries to $\widehat{\beta}_{j_1,j_2}$, where Column $j_2$ plays the role of the dependent variable; we then make the matrix symmetrical.

The PCs $\widehat{\mathbf{Z}}_{q}$ are ultimately computed using $\mathbf{C} \widehat{\mathbf{v}}_j$, with $\widehat{\mathbf{v}}_j$ being the $j$-th eigenvector of the robust correlation matrix of $\mathbf{C}$.

\subsection{Validity of our prior structure}\label{sec_validity_prior}

Relying on improper priors such as $\pi(\sigma_k, \boldsymbol\beta_k\mid  k)=c_k/\sigma_k$ may lead to inconsistencies in model selection (see \cite{casella2009consistency}). For instance, one could select different constants $c_k$ in different models so as to yield the desired conclusions. 
In this section, we show that the Jeffreys-Lindley paradox does not arise in our PCR context under the normal distribution assumption.  It is thus expected to not arise either under the robust LPTN distribution, given its similarity to the normal.

Consider Models $s$ and $t$, where Model $s$ is nested in Model $t$. The ratio of the posterior probabilities of these two models is given by (see Proposition~3.1 in our paper)
\begin{align}\label{eqn_bayes_factor}
\frac{\pi(t\mid \mathbf{y})}{\pi(s\mid\mathbf{y})} &= \frac{\Gamma((n-d_s)/2 - (d_t - d_s)/2)}{\Gamma((n-d_s)/2) ((n - d_s) / 2)^{-(d_t - d_s) / 2}} \, n^{-(d_t - d_s) / 2} \, \left(\frac{\|\mathbf{y} - \widehat{\mathbf{y}}_s\|^2/(n - 1)}{\|\mathbf{y} - \widehat{\mathbf{y}}_t\|^2/(n - 1)}\right)^{n/2}  \cr
& \qquad \times \frac{\pi^{d_t/2}}{\pi^{d_s/2}} \, \frac{((n - d_s) / 2)^{-(d_t - d_s) / 2}}{n^{-(d_t - d_s) / 2}} \, \frac{\left(\|\mathbf{y} - \widehat{\mathbf{y}}_t\|^2/(n - 1)\right)^{d_t/2}}{\left(\|\mathbf{y} - \widehat{\mathbf{y}}_s\|^2/(n - 1)\right)^{d_s/2}} \, \frac{\pi(t)}{\pi(s)}.
\end{align}
The difference between the Bayesian information criteria (BIC, \cite{schwarz1978estimating}) of Models $t$ and $s$ is given by
\begin{align*}
 \text{BIC}_t - \text{BIC}_s &= n\log\left(\|\mathbf{y} - \widehat{\mathbf{y}}_t\|^2/n\right) + (d_t + 1)\log n \cr
 &\qquad - n\log\left(\|\mathbf{y} - \widehat{\mathbf{y}}_s\|^2/n\right) - (d_s + 1)\log n  \cr
 &= n \log\left(\frac{\|\mathbf{y} - \widehat{\mathbf{y}}_t\|^2/n}{\|\mathbf{y} - \widehat{\mathbf{y}}_s\|^2/n}\right) + (d_t - d_s)\log n.
\end{align*}
Given that the first ratio on the right hand side of \eqref{eqn_bayes_factor} converges to 1 as $n\longrightarrow\infty$, we have that $\exp\{-(\text{BIC}_t - \text{BIC}_s)/2\}$ asymptotically behaves like the first part on the right hand side of \eqref{eqn_bayes_factor}. The terms $\left(\|\mathbf{y} - \widehat{\mathbf{y}}_k\|^2/(n-1)\right)^{d_k/2}$ in \eqref{eqn_bayes_factor} converge towards a constant (in $n$) and are thus dominated. The other terms in \eqref{eqn_bayes_factor} are either constant in terms of $n$ or dominated as well. Therefore, $\pi(t| \mathbf{y})/\pi(s|\mathbf{y})$ and $\exp\{-(\text{BIC}_t - \text{BIC}_s)/2\}$ share the same asymptotic behaviour. This will be sufficient to prove that the prior structure does not prevent the Bayesian variable selection procedure to be consistent, in the same sense as \cite{casella2009consistency}. If the ``true'' model is among the models considered, then its posterior probability converges to 1 as $n$ increases. Further technical details are required for a rigorous proof. Empirical evidences also point towards the validity of our claim. 

It would be interesting to investigate the asymptotic behaviour in the more general context of traditional linear regression. The fact that the regressors are standardised and linearly independent plays a role in the sketch of the proof presented above. It would however be surprising if a similar prior structure, but with slightly correlated standardised regressors, led to inconsistencies.

In practice (with finite samples), one may set the prior $\pi(k)$ to be proportional to $\pi^{-d_k/2}$ times a prior opinion about $\left(\|\mathbf{y} - \widehat{\mathbf{y}}_k\|^2/(n - 1)\right)^{-d_k/2}$, to cancel the effect of these two terms in \eqref{eqn_bayes_factor}. In the numerical analyses, we set $\pi(k)\propto 1$ because we do not have relevant information. Note that the robust approach proposed in this paper can be used with any informative prior such as those in \cite{raftery1997bayesian}.

\subsection{Proofs}\label{sec_proofs}

 \begin{proof}[Proof of Proposition~3.1]
  The proof is essentially a computation using that $f:=\mathcal{N}(0,1)$ and the structure of the principal components. First,
  \begin{align*}
  \pi(k,\sigma_k,\boldsymbol\beta_k\mid \mathbf{y})&\propto f(\mathbf{y}\mid k,\sigma_k,\boldsymbol\beta_k)\pi(\sigma_k,\boldsymbol\beta_k\mid k)\pi(k) \cr
  &\propto f(\mathbf{y}\mid k,\sigma_k,\boldsymbol\beta_k)(1/\sigma_k)\pi(k).
 \end{align*}
 The likelihood function for a given model is
\begin{align*}
 f(\mathbf{y}\mid k,\sigma_k,\boldsymbol\beta_k)&=\prod_{i=1}^n \frac{1}{\sigma_k \sqrt{2\pi}}\exp\left\{-\frac{1}{2\sigma_k^2}(y_i-\mathbf{x}_{i,k}^T \boldsymbol\beta_k)^2\right\} \cr
 &=\frac{1}{\sigma_k^n (2\pi)^{n/2}} \exp\left\{-\frac{1}{2\sigma_k^2}\sum_{i=1}^n(y_i-\mathbf{x}_{i,k}^T \boldsymbol\beta_k)^2\right\}.
\end{align*}
  We now analyse the sum in the exponential:
 \begin{align*}
  \sum_{i=1}^n(y_i-\mathbf{x}_{i,k}^T \boldsymbol\beta_k)^2&= \sum_{i=1}^n y_i^2-2\sum_{i=1}^n y_i\sum_{j=1}^{d_k} x_{iI_{j,k}} \beta_{j,k}+\sum_{i=1}^n \left(\sum_{j=1}^{d_k} x_{iI_{j,k}}\beta_{j,k}\right)^2 \cr
  &=n-1-2\sum_{j=1}^{d_k} \beta_{j,k} \sum_{i=1}^n y_i x_{iI_{j,k}}+\sum_{i=1}^n \left(\sum_{j=1}^{d_k} x_{i I_{j,k}}\beta_{j,k}\right)^2,
 \end{align*}
using that $\sum_{i=1}^n y_i^2 = n-1$. We also have
 \begin{align*}
  \sum_{i=1}^n \left(\sum_{j=1}^{d_k} x_{iI_{j,k}}\beta_{j,k}\right)^2&=\sum_{i=1}^n \left(\sum_{j=1}^{d_k} (x_{i I_{j,k}}\beta_{j,k})^2+\sum_{j,s=1 (j\neq s)}^{d_k} x_{iI_{j,k}}\beta_{j,k} x_{iI_{s,k}}\beta_{s,k}\right) \cr
  &= \sum_{j=1}^{d_k} \beta_{j,k}^2 \sum_{i=1}^n x_{i I_{j,k}}^2,
 \end{align*}
 using $\sum_{i=1}^n x_{ij}x_{is}=0$ for all $j,s\in\{2,\ldots,d\}$ with $j\neq s$, $x_{11}=\ldots=x_{n1}=1$, $(1/n)\sum_{i=1}^n x_{ij}=0$ for all $j\in\{2,\ldots,d\}$. Consequently,
 \begin{align*}
  \sum_{i=1}^n(y_i-\mathbf{x}_{i,k}^T \boldsymbol\beta_k)^2&= n-1-2\sum_{j=1}^{d_k} \beta_{j,k} \sum_{i=1}^n y_i x_{iI_{j,k}}+ \sum_{j=1}^{d_k} \beta_{j,k}^2 \sum_{i=1}^n x_{i I_{j,k}}^2  \cr
  &=n-1- \ind(k\geq 2)2\sum_{j=2}^{d_k} \beta_{j,k} \sum_{i=1}^n y_i x_{iI_{j,k}}+n\beta_{1,k}^2 \cr
  &\qquad+\ind(k\geq 2)(n-1)\sum_{j=2}^{d^k} \beta_{j,k}^2 ,
 \end{align*}
 using again $x_{11}=\ldots=x_{n1}=1$, $\sum_{i=1}^n y_i =0$ and $\sum_{i=1}^n x_{ij}^2=n - 1$ for all $j\in\{2,\ldots,d\}$. We also have
 \begin{align*}
  &\ind(k\geq 2)\left((n-1)\sum_{j=2}^{d_k} \beta_{j,k}^2 -2\sum_{j=2}^{d_k} \beta_{j,k} \sum_{i=1}^n y_i x_{iI_{j,k}}\right) \cr
  &­\quad=\ind(k\geq 2)(n-1)\sum_{j=2}^{d_k} \left(\beta_{j,k}^2-2\beta_{j,k} \frac{\sum_{i=1}^n y_i x_{iI_{j,k}}}{n-1}\right) \cr
  &\quad=\ind(k\geq 2)(n-1)\sum_{j=2}^{d_k} \left(\beta_{j,k}-\frac{\sum_{i=1}^n x_{iI_{j,k}} y_i}{n-1}\right)^2 \cr
  &\qquad-\ind(k\geq 2) (n-1) \sum_{j=2}^{d_k} \left(\frac{\sum_{i=1}^n x_{i I_{j,k}} y_i}{n-1}\right)^2.
 \end{align*}
 Putting this together leads to:
 \begin{align*}
      &\pi(k,\sigma_k,\boldsymbol\beta_k\mid \mathbf{y}) \cr
      &\propto \pi(k)(2\pi)^{d_k/2} \, \frac{1}{\sigma_k^{n-{d_k}+1}} \exp\left\{-\frac{n-1}{2 \sigma_k^2}\left(1 -\ind(k\geq 2)\sum_{j\in I_k\setminus\{1\}} \left(\frac{\sum_{i=1}^n x_{ij} y_i}{n-1}\right)^2\right)\right\} \cr
      &\quad \times \frac{1}{\sigma_k \sqrt{2\pi}} \exp\left\{-\frac{n}{2\sigma_k^2}\beta_{1,k}^2\right\} \cr
      &\quad \times\left(\ind(k=1)+\ind(k\geq 2) \prod_{j=2}^{d_k}  \frac{1}{\sigma_k \sqrt{2\pi}} \exp\left\{-\frac{n-1}{2 \sigma_k^2}\left(\beta_{j,k}-\frac{\sum_{i=1}^n x_{iI_{j,k}} y_i}{n-1}\right)^2\right\}\right).
  \end{align*}
  We multiply and divide by the appropriate terms. The only remaining thing to show is that
  \[
   n-1 \left(1 -\ind(k\geq 2)\sum_{j\in I_k\setminus\{1\}} \left(\frac{\sum_{i=1}^n x_{ij} y_i}{n-1}\right)^2\right)=\|\mathbf{y} - \widehat{\mathbf{y}}_k\|_2^2.
  \]
  Firstly, $n-1=\|\mathbf{y} \|_2^2$. Also,
  \begin{align*}
   \|\mathbf{y} \|_2^2 &= \|\mathbf{y} - \widehat{\mathbf{y}}_k + \widehat{\mathbf{y}}_k\|_2^2 \cr
   &=  \|\mathbf{y} - \widehat{\mathbf{y}}_k\|_2^2 + (\mathbf{y} - \widehat{\mathbf{y}}_k)^T \widehat{\mathbf{y}}_k + \widehat{\mathbf{y}}_k^T (\mathbf{y} - \widehat{\mathbf{y}}_k) + \widehat{\mathbf{y}}_k^T \widehat{\mathbf{y}}_k.
  \end{align*}
  We know that $(\mathbf{y} - \widehat{\mathbf{y}}_k)^T \widehat{\mathbf{y}}_k=\widehat{\mathbf{y}}_k^T (\mathbf{y} - \widehat{\mathbf{y}}_k)=0$ because $\mathbf{y} - \widehat{\mathbf{y}}_k$ is the vector of residuals which is orthogonal to $\widehat{\mathbf{y}}_k$. Finally,
  \begin{align*}
   \widehat{\mathbf{y}}_k^T \widehat{\mathbf{y}}_k = (\mathbf{X}_k\widehat{\boldsymbol\beta}_k)^T \mathbf{X}_k\widehat{\boldsymbol\beta}_k = \widehat{\boldsymbol\beta}_k^T \mathbf{X}_k^T \mathbf{X}_k \widehat{\boldsymbol\beta}_k &= (n-1) \|\widehat{\boldsymbol\beta}_k\|_2^2 \cr
   &= (n-1) \ind(k\geq 2)\sum_{j\in I_k\setminus\{1\}} \left(\frac{\sum_{i=1}^n x_{ij} y_i}{n-1}\right)^2,
  \end{align*}
  where $\mathbf{X}_k$ is the design matrix associated with Model $k$.
 \end{proof}

\begin{proof}[Proof of Proposition~2.2]
As explained in \cite{green1995reversible}, it suffices to separately verify that the probability to go from a set $A$ to a set $B$ is equal to the probability to go from $B$ to $A$ when updating the parameters and when switching models, for accepted movements and for any appropriate $A,B$.

When updating the parameters, the probability to go from a set $A$ to a set $B$ is given by
\begin{align*}
 &\int_A \pi(k,\sigma_k,\boldsymbol\beta_k\mid \mathbf{y})  g(1)\int_B  \prod_{i=1}^{1+d_k} \varphi_i(w_i\mid k,(\sigma_k,\boldsymbol\beta_k)_i,\ell_k) \cr
 &\hspace{50mm}\times\left(1\wedge\frac{(1/w_1) f(\mathbf{y}\mid k,\mathbf{w}_{k})}{(1/\sigma_{k})f(\mathbf{y}\mid k,\sigma_{k},\boldsymbol\beta_{k})}\right)d\mathbf{w}_k \, d(\sigma_k,\boldsymbol\beta_k).
\end{align*}
  Using Fubini's theorem, this probability is equal to
\begin{align*}
 &\int_B \pi(k,\mathbf{w}_k\mid \mathbf{y}) g(1)\int_A  \prod_{i=1}^{1+d_k} \varphi_i((\sigma_k,\boldsymbol\beta_k)_i\mid k,w_i,\ell_k) \cr
 &\hspace{50mm}\times\left(1\wedge\frac{(1/\sigma_{k})f(\mathbf{y}\mid k,\sigma_{k},\boldsymbol\beta_{k})}{(1/w_1) f(\mathbf{y}\mid k,\mathbf{w}_{k})}\right) d(\sigma_k,\boldsymbol\beta_k)\,d\mathbf{w}_k,
\end{align*}
which is the probability to go from $B$ to $A$. Note that this is valid for all $k\in\{1,\ldots, $ $\text{K}_{\text{max}}\}$.

The probability to switch from Model $k\in\{1,\ldots,\text{K}_{\text{max}}-1\}$, where the parameters are in the set $A$, to Model $k+1$, where the parameters are in the set $A'\times B$ (the set $A'$ is a modified version of $A$ to account for the addition of $\mathbf{c}_{k+1}$), is given by
\begin{align*}
 &\int_A \pi(k,\sigma_k,\boldsymbol\beta_k\mid \mathbf{y})   g(2)\int_B q_{k+1}(u_{k+1}) \cr
 &\qquad\times\left(1\wedge\frac{\pi(k+1)f(\mathbf{y}\mid k+1,(\sigma_{k},\boldsymbol\beta_{k})+\mathbf{c}_{k+1},u_{k+1})}{\pi(k)f(\mathbf{y}\mid k,\sigma_{k},\boldsymbol\beta_{k})q_{k+1}(u_{k+1})}\right)
 du_{k+1}\, d(\sigma_k,\boldsymbol\beta_k).
\end{align*}
After the change of variables $(\sigma_{k+1},\boldsymbol\beta_{k+1})=((\sigma_{k},\boldsymbol\beta_{k})+\mathbf{c}_{k+1},u_{k+1})$, we have
\begin{align*}
 &\int_{A'\times B} \pi(k,(\sigma_{k+1},\boldsymbol\beta_{k+1}^-)-\mathbf{c}_{k+1}\mid \mathbf{y})   g(2) q_{k+1}(\beta_{d_{k+1},k+1}) \cr
 &\qquad\times\left(1\wedge\frac{\pi(k+1)f(\mathbf{y}\mid k+1,\sigma_{k+1},\boldsymbol\beta_{k+1})}{\pi(k)f(\mathbf{y}\mid k,(\sigma_{k+1},\boldsymbol\beta_{k+1}^-)-\mathbf{c}_{k+1})q_{k+1}(\beta_{d_{k+1},k+1})}\right)
 d(\sigma_{k+1},\boldsymbol\beta_{k+1}).
\end{align*}
This last probability is equal to
 \begin{align*}
 &\int_{A'\times B} \pi(k+1,\sigma_{k+1},\boldsymbol\beta_{k+1}\mid \mathbf{y})   g(3) \cr
 &\qquad\times\left(1\wedge\frac{\pi(k)f(\mathbf{y}\mid k,(\sigma_{k+1},\boldsymbol\beta_{k+1}^-)-\mathbf{c}_{k+1})q_{k+1}(\beta_{d_{k+1},k+1})}{\pi(k+1)f(\mathbf{y}\mid k+1,\sigma_{k+1},\boldsymbol\beta_{k+1})}\right)
 d(\sigma_{k+1},\boldsymbol\beta_{k+1}),
\end{align*}
which is the probability to switch from Model $k+1$, where the parameters are in the set $A'\times B$, to Model $k$, where the parameters are in the set $A$.

Therefore, the Markov chain $\{(K,\sigma_{K},\boldsymbol\beta_{K})(m): m\in\na\}$ satisfies the reversibility condition with respect to the posterior.
\end{proof}

\subsection{List of the explanatory variables used in Section 6}\label{sec_appendix}

  \begin{table}[H]
  \centering
   \begin{tabular}{ll}
    \toprule
    \textbf{Name} & \textbf{Ticker symbol} \cr
    \midrule
      Artis Real Estate Investment Trust & AX-UN.TO  \cr
      Asanko Gold Inc. & AKG.TO \cr
      Bonterra Energy Corp. & BNE.TO \cr
      Canadian Imperial Bank Of Commerce & CM.TO \cr
      CI Financial Corp. & CIX.TO \cr
      Celestica Inc. Subordinate Voting Shares & CLS.TO \cr
      DHX Media Ltd. & DHX-B.TO \cr
      Dominion Diamond Corporation & DDC.TO \cr
      Gildan Activewear Inc. & GIL.TO \cr
      Husky Energy Inc. & HSE.TO \cr
      iPath Bloomberg Sugar Subindex & SGG \cr
      iShares MSCI Japan &  EWJ \cr
      iShares 20+ Year Treasury Bond & TLT \cr
      Laurentian Bank of Canada & LB.TO \cr
      Parkland Fuel Corporation & PKI.TO \cr
      United States Oil Fund LP & USO \cr
      Vermilion Energy Inc. & VET.TO \cr
      Volume of the S\&P 500 & N/A \cr
    \bottomrule
 \end{tabular}
 \caption{Names of the companies, funds, and financial indicators used as explanatory variables in the analysis in Section~6 of our paper, with their ticker symbol (if available)} \label{tab_names}
 \end{table}

\end{document}